\documentclass{lmcs} 
\pdfoutput=1

\usepackage{lastpage}
\lmcsdoi{17}{1}{16}
\lmcsheading{}{\pageref{LastPage}}{}{}%
{Dec.~16,~2019}{Feb.~18,~2021}{}

\keywords{formal proof, number theory, irrationality, creative
  telescoping, symbolic computation, Coq, Ap\'ery's recurrences, Riemann
zeta function}


%
\usepackage{graphicx}
\usepackage[utf8]{inputenc}
\usepackage{amsbsy,amssymb,amsmath}
\usepackage{color}
\usepackage{booktabs}
\usepackage{listings}

\usepackage[english]{babel}

\definecolor{dkblue}{rgb}{0,0.1,0.5}
\definecolor{lightblue}{rgb}{0,0.5,0.5}
\definecolor{dkgreen}{rgb}{0,0.4,0}
\definecolor{dk2green}{rgb}{0.4,0,0}
\definecolor{dkviolet}{rgb}{0.6,0,0.8}

\lstset{language=SSR}

\newcommand{\Coq}{{\sc Coq}}
\newcommand{\MC}{{\sc Mathematical Components}}
\newcommand{\Ocaml}{{\sc Ocaml}}
\newcommand{\CoqEAL}{{\sc CoqEAL}}
\newcommand{\HOLLight}{{\sc HOL-Light}}
\newcommand{\IsaHOL}{{\sc Isabelle/HOL}}
\newcommand{\Mgfun}{{\sc Mgfun}}
\newcommand{\Algolib}{{\sc Algolib}}

\newcommand{\C}[1]{\mbox{\lstinline`#1`}}

\newcommand{\bN}{\mathbb{N}}
\newcommand{\bR}{\mathbb{R}}
\newcommand{\bQ}{\mathbb{Q}}
\newcommand{\bZ}{\mathbb{Z}}

\newcommand{\bQbar}{{\overline{\mathbb{Q}}}}

\newcommand{\cA}{\mathcal{A}}

\newcommand{\bigO}{\mathcal{O}}

\newcommand*\floor[1]{\left\lfloor{#1}\right\rfloor}

\definecolor{dkblue}{rgb}{0,0.1,0.5}
\definecolor{lightblue}{rgb}{0,0.5,0.5}
\definecolor{dkgreen}{rgb}{0,0.4,0}
\definecolor{dk2green}{rgb}{0.4,0,0}
\definecolor{dkviolet}{rgb}{0.6,0,0.8}

\lstset{language=SSR}


\theoremstyle{plain} 


\begin{document}

\lstset{moredelim=[is][\color{red}\bfseries\ttfamily\underbar]{|*}{*|}}

\title{A Formal Proof of the Irrationality of $\zeta(3)$}

\author[A.~Mahboubi]{Assia Mahboubi}	
\author[T.~Sibut-Pinote]{Thomas Sibut-Pinote}	

\address{LS2N
UFR Sciences et Techniques, 2 rue de la Houssinière, BP 92208 44322
Nantes Cedex 3 France}	
\email{Assia.Mahboubi@inria.fr}  
\email{thomas.sibut-pinote@ens-lyon.org}  

\thanks{This work was supported in part by the project FastRelax
  ANR-14-CE25-0018-01.}	





\begin{abstract}

This paper presents a complete formal verification of a proof that
the evaluation of the Riemann zeta function at 3 is irrational, using
the  \Coq{} proof assistant. This result was first presented by Apéry in
1978, and the proof we have formalized essentially follows the path of his
original presentation. The crux of this proof is to establish that
some sequences satisfy a common recurrence. We formally prove this
result by an \emph{a posteriori} verification of calculations performed by
computer algebra algorithms in a Maple session. The rest of the proof
combines arithmetical ingredients and asymptotic analysis, which we
conduct by extending the \MC{} libraries. 
\end{abstract}

\maketitle


\section{Introduction}
\label{sec:intro}
In 1978, Apéry proved that $\zeta(3)$, which is the sum $\sum_{i
  =1}^{\infty} \frac{1}{i^3}$ now known as the \emph{Apéry constant},
is irrational. This result was the first dent in the problem of the
irrationality of the evaluation of the Riemann zeta function at
\emph{odd} positive integers. As of today, this problem remains a
long-standing challenge of number theory.
Zudilin~\cite{Zudilin-2001-ONZ} showed that at least one of the
numbers $\zeta(5)$,$\zeta(7)$,$\zeta(9),\zeta(11)$ must be irrational.
Ball and Rivoal~\cite{MR1787183,MR1859021} established that there are
infinitely many irrational odd zeta values. Fischler, Sprang and
Zudilin proved~\cite{fischler_sprang_zudilin_2019} that there are
asymptotically more than any power of $\log(s)$ irrational values of
the Riemann zeta function at odd integers between $3$ and $s$. But
today $\zeta(3)$ is the only \emph{known} such value to be irrational.

Van der
Poorten reports~\cite{vanderPoorten-1979-PEM} that Apéry's
announcement of this result was at first
met with wide skepticism.  His obscure presentation featured ``a sequence of
unlikely assertions'' without proofs, not the least of which was an
enigmatic recurrence (Lemma~\ref{lem:rec-apery}) satisfied by two
sequences $a$ and $b$. It took two months of collaboration between
Cohen, Lenstra, and Van der Poorten, with the help of Zagier, to
obtain a thorough proof of Apéry's theorem:

\begin{thm}[Apéry, 1978]\label{thm:apery}
        The constant $\zeta(3)$ is irrational.
\end{thm}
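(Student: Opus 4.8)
The plan is to follow Apéry's original strategy as reported by Van der Poorten, which rests on constructing two integer-flavoured sequences $a$ and $b$ whose ratio converges to $\zeta(3)$ extremely fast. Concretely, I would introduce the sequences
\[
  a_n = \sum_{k=0}^{n} \binom{n}{k}^2 \binom{n+k}{k}^2, \qquad
  b_n = \sum_{k=0}^{n} \binom{n}{k}^2 \binom{n+k}{k}^2\, c_{n,k},
\]
where $c_{n,k}$ is the appropriate correction term (a combination of $\sum_{m\le n} 1/m^3$ and $\sum_{m\le k}(-1)^{m-1}/\bigl(2m^3\binom{n}{m}\binom{n+m}{m}\bigr)$). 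The heart of the argument is Lemma~\ref{lem:rec-apery}: both $a$ and $b$ satisfy the same second-order linear recurrence
\[
  (n+1)^3 u_{n+1} = (34n^3 + 51n^2 + 27n + 5)\, u_n - n^3 u_{n-1}.
\]
I would establish this by creative telescoping, i.e.\ by exhibiting explicit certificate rational functions that witness the recurrence after summation over $k$; this is exactly the step the paper proposes to discharge by an \emph{a posteriori} verification of a Maple computation, and it is the part I expect to be the main obstacle, both mathematically (finding the certificates) and, in a formal setting, logistically (checking a large symbolic identity).

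Granting the recurrence, the next block is arithmetic. One shows $a_n \in \mathbb{Z}$ for all $n$ (immediate from the binomial form) and, crucially, that $2\,\operatorname{lcm}(1,\dots,n)^3\, b_n \in \mathbb{Z}$: the denominators introduced by $c_{n,k}$ are controlled because $\binom{n}{m}\binom{n+m}{m}$ divides $\operatorname{lcm}(1,\dots,n)^3$ in the relevant way. Combined with the prime number theorem in the weak form $\operatorname{lcm}(1,\dots,n) = e^{n(1+o(1))}$, this says $b_n$ has denominator at most $e^{3n(1+o(1))}$. Separately, from the recurrence one extracts the asymptotics of the two independent solutions: the dominant one grows like $\alpha^n$ with $\alpha = (1+\sqrt{2})^4 = 17 + 12\sqrt{2}$, so $a_n \sim C\,\alpha^n n^{-3/2}$, while the linear combination $a_n \zeta(3) - b_n$ (which one checks, via the telescoped correction term, tends to $0$) decays like $\alpha^{-n}$.

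The endgame is the standard irrationality criterion. Suppose $\zeta(3) = p/q$ with $p,q \in \mathbb{Z}$, $q>0$. Set $\delta_n = a_n \zeta(3) - b_n$. Then $q\,\delta_n \cdot 2\operatorname{lcm}(1,\dots,n)^3$ is an integer; it is nonzero for all large $n$ (because $a_n, b_n > 0$ and $\zeta(3)$ is not a ratio of these particular integers — one shows $\delta_n \neq 0$ directly from its series representation); and its absolute value is bounded above by $q \cdot e^{3n(1+o(1))} \cdot C' \alpha^{-n} = C' q\, e^{(3 - 3\log\alpha + o(1))n}$. Since $\log\alpha = \log(17+12\sqrt2) \approx 3.52 > 3$, wait — more precisely $3\log\alpha > 3 \cdot 3 = 9 > 3$, so this bound $\to 0$. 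A nonzero integer of absolute value tending to $0$ is impossible, so no such $p,q$ exist and $\zeta(3)$ is irrational. In the formalization, the delicate quantitative points are the exact value of $\alpha$ and the inequality $3\log\alpha > 3$ (equivalently $(17+12\sqrt2)^3 > e^3$, in fact $\gg$), together with making the $o(1)$ terms fully effective so that the comparison holds past an explicit threshold; all of this is routine asymptotic analysis once the recurrence and the $\operatorname{lcm}$ bound are in hand.
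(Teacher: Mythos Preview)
Your proposal follows the same overall strategy as the paper: define Ap\'ery's sequences $a_n, b_n$, establish the common second-order recurrence via creative telescoping, control the denominators of $b_n$ by $\ell_n = \operatorname{lcm}(1,\dots,n)$, and conclude by showing that $q\,\ell_n^3\,\delta_n$ is a nonzero integer tending to $0$. The one substantive difference is in the $\ell_n$ estimate: you invoke the Prime Number Theorem in the form $\ell_n = e^{n(1+o(1))}$, whereas the paper deliberately avoids PNT and instead proves Hanson's elementary bound $\ell_n = O(3^n)$; since $(\sqrt{2}-1)^4 \cdot 27 < 1$, this already suffices. In a formalization context this is a large saving, as PNT is a heavy result to verify while Hanson's argument is elementary combinatorics with multinomial coefficients. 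Similarly, the paper replaces your Poincar\'e--Perron-style asymptotics $a_n \sim C\alpha^n n^{-3/2}$ by the cruder but adequate bound $33^n = O(a_n)$, obtained by showing directly that $a_{n+1}/a_n$ is increasing and exceeds $33$ at $n=51$; combined with $\zeta(3)-b_n/a_n = O(1/a_n^2)$ via the Casoratian, this gives $\ell_n^3\delta_n \to 0$ without any general theory of linear recurrences.

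One arithmetic slip in your endgame: you write the bound as $e^{(3 - 3\log\alpha + o(1))n}$, but $e^{3n}\cdot\alpha^{-n} = e^{(3-\log\alpha)n}$; the extra factor $3$ on $\log\alpha$ is spurious. Your first instinct, $\log\alpha \approx 3.52 > 3$, was already the correct check, and the ``wait'' that follows only muddies it.
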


In the present paper, we describe a formal proof of this theorem
inside the \Coq{} proof
assistant~\cite{the_coq_development_team_2020_3744225}, using
the \MC{} libraries~\cite{coqfinitgroup}. This formalization follows
the structure of Apéry's original proof. However, we replace the
manual verification of recurrence relations by an automatic discovery
of these equations, using symbolic computation. For this purpose, we
use Maple packages to perform calculations outside the proof
assistant, and we verify a posteriori the resulting claims inside
\Coq{}. By combining these verified results with additional formal
developments, we obtain a complete formal proof of
Theorem~\ref{thm:apery}, formalized using the \Coq{} proof assistant
without additional axiom. In particular, the proof is entirely
constructive, and does \emph{not} rely on the axiomatic definition of
real numbers provided in \Coq{}'s standard library. A previous
paper~\cite{chyzak:hal-00984057} reported on the implementation of the
cooperation between a computer algebra system and a proof assistant
used in the formalization. The present paper is self-contained: it
includes a summary of the latter report, and provides more details
about the rest of the formal proof. In particular, it describes the
formalization of an upper bound on the asymptotic behavior of $lcm(1,
... , n)$, the least common multiple of the integers from $1$ to $n$,
a part of the proof which was missing in the previous report.

The rest of the paper is organized as follows.
We first describe the background formal theories used in our
development (Section~\ref{sec:preliminaries}).
We then outline the proof of Theorem~\ref{thm:apery} (Section~\ref{sec:proof}).
We summarize the algorithms used in the Maple session, the data this
session produces and the way this data can be used in formal proofs
(Section~\ref{sec:algos}).
We then describe the proof of the consequences of Apéry's recurrence (Section
~\ref{sec:asymptotics}).
Finally, we present an elementary proof of the bound on the asymptotic
behavior of the sequence $lcm(1, ... , n)$, which is used in this
irrationality proof (Section~\ref{sec:hanson}), before commenting on
related work and concluding
(Section~\ref{sec:conclusions_formal_proof}).

The companion code to the present article can be found in the following repository:

\centerline{\url{https://github.com/math-comp/apery}.}

\section{Preliminaries}
\label{sec:preliminaries}

This section provides some hints about the representation of the
different natures of numbers at stake in this proof in the libraries
backing our formal development. It also describes a few extensions we
devised for these libraries and sets some notations used throughout
this paper. Most of the material presented here is related to the
\MC{} libraries~\cite{coqfinitgroup,gonthier:hal-00816699}.

\subsection{Integers}
\label{subsec:nat}

In \Coq{}, the set  $\bN$ of natural numbers is usually represented
by the type~\lstinline+nat+:
\begin{lstlisting}
Inductive |*nat*| := O | S : nat -> nat.
\end{lstlisting}
This type is defined in a prelude library, which is automatically
imported by any \Coq{} session. It models the elements of $\bN$ using
a unary representation: \Coq{}'s parser reads the number $2$ as the term
\lstinline|S (S O)|. The structural induction principle associated with this
inductive type coincides with the usual recurrence scheme on natural
numbers. This is convenient for defining elementary functions on
natural numbers, like comparison or arithmetical operations, and for
developing their associated theory. However, the resulting programs
are usually very naive and inefficient implementations, which should
only be evaluated for the purpose of small scale computations.

The set $\bZ$ of integers can be represented by gluing together two
copies of type~\C{nat}, which provides a signed unary representation
of integers:
\begin{lstlisting}
Inductive |*int*| : Set := Posz of nat | Negz of nat.
\end{lstlisting}
If the term \C{n : nat} represents the natural number $n\in\bN$, then
the term \C{(Posz n) : int} represents the integer $n\in\bZ$ and
the term \C{(Negz n) : int} represents the integer $-(n + 1)\in\bZ$.
In particular, the constructor \lstinline|Posz : nat -> int| implements the
embedding of type \C{nat} into type \C{int}, which is invisible on paper
because it is just the inclusion $\bN\subset\bZ$. In order to mimic
the mathematical practice, the constant \C{Posz} is declared as a
\emph{coercion}, which means in particular that unless otherwise
specified, this function is hidden from the terms displayed by \Coq{}
to the user (in the current goal, in answers to search queries, etc).

The \MC{} libraries provide formal definitions of a few elementary
concepts and results from number theory, defined on the type
\C{nat}. For instance, they provide the theory of Euclidean division,
a boolean primality test, the elementary properties of the factorial
function, of binomial coefficients, etc. In the rest of the paper,
we use the standard mathematical notations $n!$ and $\binom{n}{m}$ for
the corresponding formal definition of the factorial and of the binomial
coefficients respectively. These libraries also define the $p$-adic valuation $v_p(n)$
of a number $n$: if $p$ is a prime number, it is the exponent of $p$
in the prime decomposition of $n$. However, we had to extend the
available basic formal theory with a few extra standard results,
like the formula giving the $p$-adic valuation of factorials:
\begin{lem}\label{lem:valfact}
For any $n \in \bN$ and for any prime number $p$:
$$v_p(n!) = \sum_{i =1}^{\floor{\log_p n}}\floor{\frac{n}{p^i}}.$$
\end{lem}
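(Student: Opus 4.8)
The plan is to start from the multiplicativity of the $p$-adic valuation, which gives
$$v_p(n!) = \sum_{k=1}^{n} v_p(k),$$
and then to account, for each $k$, for the contribution of $v_p(k)$ by the powers of $p$ dividing $k$. The key observation is the identity $v_p(k) = \#\{\, i \in \bN \mid i \ge 1,\ p^i \text{ divides } k \,\}$, valid for $k \ge 1$: indeed, for $k\ge 1$ one has $p^i \mid k$ if and only if $i \le v_p(k)$, so the set above is $\{1,\dots,v_p(k)\}$. Since $p \ge 2$, no index $i$ with $p^i > n \ge k$ can occur, so for any bound $N$ with $p^N > n$ — in particular $N = n$ works — we may write $v_p(k) = \sum_{i=1}^{N} e_i(k)$, where $e_i(k)$ is $1$ if $p^i$ divides $k$ and $0$ otherwise.

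First I would substitute this expression into the sum over $k$ and swap the two finite summations, obtaining
$$v_p(n!) = \sum_{i=1}^{N} \#\{\, k \mid 1 \le k \le n,\ p^i \text{ divides } k \,\}.$$
The inner cardinality is the number of multiples of $p^i$ in the interval $[1,n]$, which is exactly $\floor{n / p^i}$; this is a small, self-contained lemma about Euclidean division, which I would prove (or locate in the libraries) separately. At this stage we have $v_p(n!) = \sum_{i=1}^{N} \floor{n/p^i}$ for every $N$ large enough.

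Finally, I would truncate this sum to its nonzero part. Whenever $p^i > n$ we have $\floor{n/p^i} = 0$, and since $i$ is an integer the condition $p^i > n$ is equivalent to $i > \log_p n$, hence to $i > \floor{\log_p n}$. So the terms with index $i > \floor{\log_p n}$ all vanish, and the sum collapses to $\sum_{i=1}^{\floor{\log_p n}} \floor{n/p^i}$, which is the claimed formula. The degenerate cases $n = 0$ and $n = 1$ are consistent with this statement, both sides being $0$ (the right-hand sum being empty).

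The main obstacle, especially in the formal development, is not the arithmetic but the bookkeeping around the summation bound $\floor{\log_p n}$: one must pin down that $p^i > n$ holds exactly when $i > \floor{\log_p n}$ without circular reasoning about logarithms, which is most conveniently done by first establishing the statement with an over-generous bound such as $N = n$ and only afterwards rewriting the upper limit to $\floor{\log_p n}$. Exchanging the two finite sums and manipulating the indicator terms $e_i(k)$ additionally rely on the "big operator" lemmas of the underlying library, but these steps are routine.
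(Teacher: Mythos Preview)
Your proof is correct: this is the classical double-counting argument for Legendre's formula, and each step is sound.

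There is, however, nothing to compare against: the paper states this lemma as a standard result and gives no proof. What the paper does discuss is how the \emph{statement} is best formalized, and here your instincts line up with the authors' choice. You suggest first proving the identity with an over-generous upper bound $N$ (any $N$ with $p^N > n$) and only afterwards specializing to $\floor{\log_p n}$. The paper does exactly this at the level of the formal statement: rather than fixing the bound to $\floor{\log_p n}$, the formalized version reads
\[
\text{for any } j,n \text{ with } n < p^{j+1},\qquad v_p(n!) = \sum_{i=1}^{j} \floor{n/p^i},
\]
precisely because a free upper bound eases unification and avoids dragging the truncated logarithm into every use of the lemma. So your ``main obstacle'' is real, and your proposed workaround coincides with the one the paper adopts.
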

Incidentally, the formal version of this formula is a typical example
of the slight variations one may introduce in a mathematical
statement, in order to come up with a formal sentence which is not
only correct and faithful to the original mathematical result, but
also a tool which is easy to use in subsequent formal proofs.  First,
although the fraction in the original statement of
Lemma~\ref{lem:valfact} may suggest that rational numbers play a role
here, $\floor{\frac{n}{m}}$ is in fact exactly the quotient of the
Euclidean division of $n$ by $m$. In the rest of the paper, for
$n,m\in \bN$ and $m$ non-zero, we thus write $\floor{\frac{n}{m}}$ for
the quotient of the Euclidean division of $n$ by $m$. Perhaps more
interestingly, the formal statement of Lemma~\ref{lem:valfact} rather
corresponds to the following variant:
$$\textrm{For any prime }p \textrm{ and any }j, n\in\bN, \textrm{ such
  that }n < p ^ {j+1}, v_p(n!) = \sum_{i =1}^j \floor{\frac{n}{p^i}}.$$
Adding an extra variable to generalize the upper bound of the sum is a
better option because it will ease unification when this formula is
applied or used for rewriting. Moreover, we do not really need to
introduce logarithms here: indeed, $\floor{\log_p n}$ is used to
denote the largest power of $p$ smaller than $n$. For this purpose, we
could use the function \C{trunc_log : nat -> nat -> nat} provided by
the \MC{} libraries, which computes the greatest exponent $\alpha$
such that $n^\alpha \leq m$, in other words $\floor{\log_n m}$. Better
yet, since the summand is
zero when the index $i$ exceeds this value, we can simplify the side
condition on the extra variable and require only that $n < p ^ {j+1}$.

The basic theory of binomial coefficients present in the \MC{}
libraries describes their role in elementary enumerative
combinatorics. However, when viewing binomial coefficients as a
sequence which is a certain solution of a recurrence system, it
becomes natural to extend their domain of definition to integers: we
thus developed a small library about these generalized binomial
coefficients. We also needed to extend these libraries with properties
of multinomial coefficients.  For $n, k_1, \dotsc, k_l\in \bN$, with
$k_1 + \dotsb + k_l = n$, the coefficient of $x_1^{k_1} \dotsm
x_l^{k_l}$ in the formal expansion of $\left(x_1 + \dotsb +
x_l\right)^n$ is called a \emph{multinomial coefficient} and denoted
$\binom{n}{k_1, \dotsc, k_l}$. Its value is $\frac{n!}{k_1! \dotsc
  k_l!}$ or equivalently, $\prod\limits_{i=1}^l \binom{k_1 + \dots +
  k_i}{k_i}$. The latter formula provides for free the fact that
multinomial coefficients are non-negative integers and we use it
in our formal definition: for \C{(l : seq nat)} a finite sequence
$l_1,\dots,l_s$ of natural numbers, then \C{(multinomial l)} is the
multinomial coefficient $\binom{l_1+\dots+l_s}{l_1, \dotsc, l_s}$:
\begin{lstlisting}
Definition |*multinomial*| (l : seq nat) : nat :=
 \prod_(0 <= i < size l) (binomial (\sum_(0 <= j < i.+1) l`_j) l`_i).
\end{lstlisting}
From this definition, we prove formally the other
characterizations, as well as the generalized Newton formula
describing the expansion of $\left(x_1 + \dotsb + x_l\right)^n$.

\subsection{Rational numbers, algebraic numbers, real numbers}
\label{ssec:realcomplex}

In the \MC{} libraries, rational numbers are represented using a
dependent pair. This type construct, also called $\Sigma$-type, is
specific to dependent type theory: it makes possible to define a type
that decorates a data with a proof that a certain property holds on
this data. The \MC{} libraries also include a construction of the
algebraic closure for countable fields, and thus a construction of
$\bQbar$, algebraic closure of $\bQ$, the field of rational
numbers. The corresponding type, named \C{algC} is equipped with a
structure of (partially) ordered, algebraically closed
field~\cite{gonthier:hal-00816699}. Slightly abusing notation, we
denote by $\bQbar \cap \bR$ the subset of $\bQbar$ containing elements
with a zero imaginary part and we call such elements real algebraic
numbers.

Almost all the irrational numbers involved in the present proof are
real algebraic numbers, and more precisely, they are of the form
$r^{\frac{1}{n}}$ for $r$ a non-negative rational number $r$ and for $n \in
\bN$. The only place where these numbers play a role is in auxiliary
lemmas for the proof of the asymptotic behavior of the sequence
$(\ell_n)_{n\in\bN}$, where $\ell_n$ is the least common multiple of
integers between $1$ and $n$. They appear in inequalities expressing
signs and estimations.

It might come as a surprise that we used the type
\C{algC} of algebraic (complex) numbers to cast these quantities,
although we do not actually need imaginary complex numbers. But this choice
proved convenient due to the fact that the type \C{algC} features both
a definition of $n$-th roots, and a clever choice of partial
order. Indeed, although $\bQbar$ cannot be ordered as a field, it is
equipped with a binary relation, denoted $\leq$, which coincides with
the real order relation on $\bQbar \cap \bR$:
\[\forall x,y \in \bQbar, \; x \leq y \Leftrightarrow y - x \in \bR_{\geq0}. \]
In particular, for any $z\in \bQbar$:
\[ 0 \leq z \Leftrightarrow z \in {\bR_{\geq 0}} \textrm{ and } z \leq 0 \Leftrightarrow z \in {\bR_{\leq 0}}. \]
Moreover, the type \C{algC} is equipped with a function \C{n.-root :
  algC -> algC}, defined for any \C{(n : nat)}, such that \C{(n.-root
  z)} is the $n$-th (complex) root of \C{z} with minimal non-negative
argument. Crucially, when \C{(z : algC)} represents a non-negative
real number, \C{(n.-root z)} coincides with the definition of the real
$n$-th root, and thus the following equivalence holds:
\begin{lstlisting}
Lemma |*rootC_ge0*| (n : nat) (z : algC) : n > 0 -> (0 <= n.-root z) = (0 <= z).
\end{lstlisting}
The shape of Lemma \C{|*rootC_ge0*|} is typical of the style pervasive
in the \MC{} libraries, where equivalences between decidable
statements are stated as boolean equalities. It expresses that
for an algebraic number $x$, that is for $x \in \bQbar$, we have
$x^\frac{1}{n} \in \bR_{\geq 0}$ if and only if $x \in \bR_{\geq 0}$.

The one notable place at which we need to resort to a larger subset of
the real numbers is the definition of the number $\zeta(3)$, if only
because as of today, it is not even known whether $\zeta(3)$ is
algebraic or transcendental. This number is actually defined as the
limit of the partial sums $\sum_{m=1}^n \frac1{m^3}$, so we start our
formal study by establishing the existence of this limit.

More precisely, we show that this sequence of partial sums is a
\emph{Cauchy sequence}, with an explicit modulus of convergence.

\begin{defi}\label{def:cauchyseq}
  A \emph{Cauchy sequence} is a  sequence
  $(x_n)_{n \in \mathbb{N}} \in \mathbb{Q}^{\mathbb{N}}$ together with a modulus
  of convergence $m_x\ :\ \bQ \rightarrow \bN$ such that if $\varepsilon$ is a positive rational
number, then any two elements of index greater than $m_x(\varepsilon)$
are separated at most by $\varepsilon$.
\end{defi}

\begin{prop}\label{prop:z3def}
The sequence $z_n = \sum_{m=1}^n \frac1{m^3}$ is a Cauchy sequence.
\end{prop}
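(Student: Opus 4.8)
The plan is to exhibit an explicit modulus of convergence and to reduce the Cauchy condition to a telescoping estimate. Since the summand $1/m^3$ is positive, the sequence $(z_n)$ is nondecreasing, so for indices $p \ge q$ we have $|z_p - z_q| = z_p - z_q = \sum_{m = q+1}^{p} \frac{1}{m^3}$. The crux is then the elementary bound $\frac{1}{m^3} \le \frac{1}{m^2} \le \frac{1}{(m-1)m} = \frac{1}{m-1} - \frac{1}{m}$, valid for $m \ge 2$, whose right-hand side telescopes: $\sum_{m=q+1}^{p} \bigl(\frac{1}{m-1} - \frac{1}{m}\bigr) = \frac{1}{q} - \frac{1}{p} \le \frac{1}{q}$. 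Hence $|z_p - z_q| \le 1/\min(p,q)$ whenever $\min(p,q) \ge 1$.

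For the modulus, given a positive rational $\varepsilon$, I would take $m_z(\varepsilon)$ to be any natural number with $m_z(\varepsilon) \ge 1/\varepsilon$ and $m_z(\varepsilon) \ge 1$; concretely, the integer ceiling of $1/\varepsilon$ (read off from the fractional representation of $\varepsilon$ in the \MC{} library, e.g.\ the numerator of $1/\varepsilon$ plus one), which is a genuine computable function $\bQ \to \bN$ as required by Definition~\ref{def:cauchyseq}. Then for $p, q > m_z(\varepsilon)$, writing $r = \min(p,q)$, we have $r > m_z(\varepsilon) \ge 1$, and the estimate above gives $|z_p - z_q| \le 1/r < 1/m_z(\varepsilon) \le \varepsilon$, which is exactly the Cauchy property.

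Carrying this out formally, the main ingredients are: (i) monotonicity of $(z_n)$ and the rewriting of $z_p - z_q$ as a sum over an index range, via \MC{}'s \C{bigop} lemmas on splitting and telescoping sums; (ii) the pointwise inequality $1/m^3 \le 1/(m-1) - 1/m$ over $\bQ$ together with its summation; (iii) the rational arithmetic relating $m_z(\varepsilon)$, $1/\varepsilon$, and the ceiling function. I expect the principal friction to be bookkeeping rather than mathematics: handling the degenerate case $q = 0$ (excluded by forcing $m_z(\varepsilon) \ge 1$), keeping track of the direction of inequalities when moving between \C{nat} indices and their images in \C{rat}, and arranging the telescoping sum in the precise shape expected by the available library lemma so that it collapses without an ad hoc induction. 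The mathematical content is standard; the work is in making the modulus explicit and the estimates machine-checkable.
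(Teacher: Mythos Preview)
Your argument is correct and is the standard one: bound $1/m^3$ by the telescoping term $1/(m-1)-1/m$ and read off an explicit modulus from $1/\varepsilon$. The paper does not spell out a proof of this proposition at all; it only records that the partial sums form a Cauchy sequence ``with an explicit modulus of convergence'' and then moves on, so there is nothing to compare beyond noting that your approach delivers precisely such an explicit modulus and is fully compatible with the constructive, rational-sequence setting the paper works in.
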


Two sequences $x$ and $y$ are \emph{Cauchy equivalent} if $x$ and $y$
are both Cauchy sequences, and if eventually
$|x_n - y_n| < \varepsilon$, for any $\varepsilon > 0$. Real numbers
could be constructed formally by introducing a quotient type, whose
element are the equivalent classes of the latter relation. But this is
rather irrelevant for this formalization, which involves explicit
sequences and their asymptotic properties, rather than real
numbers. For this reason, the formal statements in this formal proof
only involve sequences of rational numbers, and a type of Cauchy
sequences which pairs such a sequence with a proof that it has the
Cauchy property.  For instance, for two Cauchy sequences $x$ and $y$,
we write $x < y$ if there is a rational number $\varepsilon > 0$ such
that eventually $x_n + \varepsilon \leq y_n$.

We benefited from the formal library about Cauchy sequences developed
by Cohen~\cite{cohen:hal-00671809}. This library defines Cauchy
sequences of elements in a totally ordered field, and introduces a
type \C{(creal F)} of Cauchy sequences over the totally ordered field
\C{F}, given as a parameter. We thus use the instance \C{(creal rat)}.
The infix notation \C{==}, in the notation scope \C{CR}, denotes the
equivalence of Cauchy sequences, as in the statement \C{(x == y)\%CR},
which states that the two Cauchy sequences \C{x, y} are equivalent. The
library implements a \emph{setoid} of field operations over this
type~\cite{DBLP:journals/jfp/BartheCP03,DBLP:journals/jfrea/Sozeau09},
so as to facilitate substitutions for equivalents in formulas. In
addition, the library provides a tactic called \C{bigenough}, which
eases formal proofs by allowing a dose of laziness. This tactic is
specially useful in proofs that a certain property on sequences is
eventually true, which involve constructing effective moduli of
convergence.

The formal statement corresponding to Theorem~\ref{thm:apery} is thus:
\begin{lstlisting}
Theorem |*zeta_3_irrational*| : ~ exists (r : rat), (z3 == r%:CR)%CR.
\end{lstlisting}
where the postfix notation \C{r\%:CR} denotes the Cauchy sequence
whose elements are all equal to the rational number
\C{(r : rat)}. The term \C{z3} is the Cauchy sequence corresponding to
the partial sums  $(z_n)_{n\in\bN}$, that is, the dependent pair of
this sequence with a proof of Property~\ref{prop:z3def}.
The formal statement thus expresses that no constant rational
sequence can be Cauchy equivalent to $(z_n)_{n\in\bN}$. Interestingly,
a long-lasting typo has marred the formal statement of theorem 
\C{zeta_3_irrational} in the corresponding \Coq{} libraries,
until writing the revised version of the present paper. Until then,
the (inaccurate) statement was indeed:
\begin{lstlisting}
Theorem |*incorrect_zeta_3_irrational*| : ~ exists (r : rat), (z3 = r%:CR)%CR.
\end{lstlisting}
Replacing \C{==} by \C{=} changes the statement completely, as it now
expresses that there is no constant sequence of rationals equal to
\C{z3}: and this is trivially true. The typo was already present in
the version of the code that we made public for our previous
report~\cite{chyzak:hal-00984057}, and the typo has remained
unnoticed since. Yet fortunately, the \emph{proof script} was right, and
actually described a correct proof of the stronger statement
\C{zeta_3_irrational}.

\subsection{Notations}
\label{subsec:theories_notations}
In this section, we provide a few hints on the notations used in the
formal statements corresponding to the paper proof, so as to make
precise the meaning of the statements we have proved formally.
Indeed, this development makes heavy use of the notation facilities
offered by the \Coq{} proof assistant, so as to improve the
readability of formulas. For instance, notation scopes allow to use
the same infix notation for a relation on type \C{nat}, and in this
case \C{(x < y)} unfolds to \C{(ltn x y)\ : bool}, or for a relation
on type \C{creal rat)}, and in that case \C{(x < y)} unfolds to
\C{(lt\_creal x y)\ : Prop}, the comparison predicate described in
Section~\ref{ssec:realcomplex}. Notation scopes can be made explicit
using post-fixed tags: \C{(x < y)\%N} is interpreted in the scope
associated with natural numbers, and \C{(x < y)\%CR}, in the scope
associated with Cauchy sequences.
 
Generic notations can also be shared thanks to type-class like
mechanisms. The \MC{} libraries feature a hierarchy of algebraic
structures~\cite{garillot:inria-00368403}, which organizes a corpus of
theories and notations shared by all the instances of a same
structure.  This hierarchy implements inheritance and sharing using
\Coq{}'s mechanisms of coercions and of canonical
structures~\cite{mahboubi:hal-00816703}. Each structure in the
hierarchy is modeled by a dependent record, which packages a type with
some operations on this type and with requirements on these
operations. In order to equip a given type with a certain structure,
one has to endow this type with enough operations and properties,
following the signature prescribed by the structure.  For example,
these structures are all \emph{discrete}, which means that they
require a boolean equality test. In turn, all instances of all these
structures share the same infix notation \C{(x == y)} for the latter
boolean equality test between \C{x} and \C{y}: this notation makes
sense for \C{x,y} in type \C{nat},\C{int}, \C{rat}, \C{alC},
etc. because all these types are instances of the same structure.
For instance, although type \C{rat} is a dependent pair (see
Section~\ref{ssec:realcomplex}), the boolean comparison test only needs
to work with the data: by Hedberg's theorem~\cite{Hedberg:1998}, the
proof stored in the proof component can be made irrelevant. Note
that the situation is different for the type \C{(creal rat)} of Cauchy
sequences. The formal statement of theorem \C{zeta_3_irrational} (see
Section~\ref{ssec:realcomplex}) uses the same \C{==} infix symbol,
but in a different scope, in which it refers to \C{Prop}-valued Cauchy
equivalence. Indeed, this relation cannot be turned constructively
into a boolean predicate, as the comparison of Cauchy sequences is
not effective. The type \C{algC} of algebraic numbers by contrast
enjoys the generic version of the notation, as ordered fields only
require a partial order relation.

Partial order, but also units of a ring, and inverse operations are
examples of operations involved in some structures of the hierarchy,
that make sense only on a subset of the elements of the carrier. In
the dependent type theory implemented by \Coq{}, it would be possible
to use a dependent pair in order to model the source type of such an
inverse operation. Instead, as a rule of thumb, the signature of a
given structure avoids using rich types as the source types of their
operations but rather ``curry'' the specification. For instance, the
source type of the inverse operation in the structure of ring with
units is the carrier type itself, but the signature of this structure
also has a boolean predicate, which selects the units in this carrier
type. The inverse operation has a default behavior outside units and
the equations of the theory that involve inverses are typically
guarded with invertibility conditions. Hence although the expression
\C{x^-1 * x} is syntactically well-formed for any term \C{x} of an
instance of ring with units, it can be rewritten to \C{1} only when
\C{x} is known to be invertible.

The readability of formulas also requires dealing in a satisfactory
manner with the inclusion of the various collections of numbers at
stake, that are represented with distinct types, for instance:
$$\bN \subset \bZ \subset \bQ \subset \bQbar.$$

The implementation of the inclusion $\bN \subset \bZ$ was mentioned
in Section~\ref{subsec:nat}. The canonical embedding of type \C{int}
is available in the generic theory of rings, but unfortunately, it
cannot be declared as a coercion, and eluded in formal statements: the
type of the corresponding constant would violate the uniform
inheritance condition prescribed by \Coq{}'s coercion
mechanism~\cite{the_coq_development_team_2020_3744225}. Its formal
definition hence comes with a generic postfix notation \lstinline|_%:~R|,
modeled as a reminiscence of the syntax of notation scopes and used
to cast an integer as an element of another ring. The latter embedding
is pervasive in formulas expressing the recurrence relations involved
in this proof. Indeed, these recurrence relations feature polynomial
coefficients in their indices and relate the rational elements of
their solutions. See for instance Equation~\ref{eq:apery-rec}.

\subsection{Computations}
\label{subsec:computations}

Using the unary representation of integers described in
Section~\ref{subsec:nat}, the command:
\begin{lstlisting}
Compute 100*1000.
\end{lstlisting}
which asks \Coq{} to evaluate this product, triggers a stack overflow.
For the purpose of running computations inside \Coq{}'s logic, on
integers of a medium size, an alternate data-structure is required,
together with less naive implementations of the arithmetical
operations. The present formal proof requires this nature of
computations at several places, for instance in order to evaluate
sequences defined by a recurrence relation at a few particular values.
For these computations, we used the binary representation of integers
provided by the \C{ZArith} library included in the standard
distribution of \Coq{}, together with the fast reduction mechanism
included in \Coq{}'s kernel~\cite{Gregoire:2002:CIS:581478.581501}.

These two ingredients are also used behind the scene by tactics
implementing verified decision procedures. For instance, we make
extensive use of proof commands dedicated to the normalization of
algebraic expressions like the \lstinline|field| tactic for rational
fractions, and the \lstinline|ring| tactic for
polynomials~\cite{mahboubi:hal-00819484}. The field tactic generates
proof obligations describing sufficient conditions for the
simplifications it made. In our case, these conditions in turn are
solved using the \C{lia} decision procedure for linear
arithmetics~\cite{Besson:2006:FRA:1789277.1789281}.

These tactics work by first converting formulas in the goal into
instances of appropriate data structures, suitable for larger scale
computation.  This pre-processing, hidden to the user, is
performed by extra-logical code that is part of the internal
implementation of these tactics. The situation is different when a
computational step in a proof requires the evaluation of a formula at
a given argument, and when both the formula and the argument are
described using proof-oriented, inefficient representations. In that
case, for instance for evaluating terms in a given sequence, we used
the \CoqEAL{} library~\cite{cohen:hal-01113453}, which provides an
infrastructure automating the conversion between different
data-structures and algorithms used to model the same mathematical
objects, like different representations of integers or different
implementations of a matrix product. Note that although the \CoqEAL{}
library itself depends on a library for big numbers, which provides
direct access in \Coq{} to \Ocaml{}'s library for arbitrary-precision,
arbitrary-size signed integers, the present proof does not need this
feature.

\section{Outline of the proof}
\label{sec:proof}
There exists several other proofs of Apéry's theorem. Notably,
Beukers~\cite{Beukers-1979-NIZ} published an elegant proof, based
on integrals of pseudo-Lengendre polynomials, shortly after Apéry's
announcement.
According to Fischler's survey~\cite{Fischler04}, all these proofs share a
common structure. They rely on the asymptotic behavior of the sequence
$\ell_n$, the least common multiple of integers between $1$ and $n$,
and they proceed by exhibiting two sequences of rational numbers $a_n$
and $b_n$, which have the following properties:
\begin{enumerate}
\item For a sufficiently large $n$:
$$a_n \in \bZ \quad \textrm{ and } 2\ell_n^3b_n \in \bZ;$$
\item The sequence $\delta_n = a_n\zeta(3) - b_n$ is such that:
$$\limsup_{n\rightarrow \infty}|2\delta_n|^ {\frac{1}{n}} \leq (\sqrt{2}
-1)^ 4;$$
\item\label{it:lim0} For an infinite number of values $n$, $\delta_n \neq 0$.
\end{enumerate}
Altogether, these properties entail the irrationality of
$\zeta(3)$. Indeed, if we
suppose that there exists $p,q\in\bZ$ such that
$\zeta(3) = \frac{p}{q}$, then $2q\ell_n^ 3\delta_n$ is an integer when $n$ is
large enough. One variant of the Prime Number theorem states
that $\ell_n =e^{n\,(1+o(1))}$ and since $(\sqrt{2} -1)^4e^3 < 1$, the
sequence $2q\ell_n^3\delta_n$ has a zero limit, which contradicts the third
property. Actually, the Prime Number theorem can be replaced by
a weaker estimation of the asymptotic behavior of $\ell_n$, that can
be obtained by more elementary means.

\begin{lem}\label{lem:hanson}
Let $\ell_n$ be the least common multiple of integers $1, \dots, n$,
then
$$\ell_n = O(3^n) .$$
\end{lem}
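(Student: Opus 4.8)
The plan is to reduce the estimate to an elementary upper bound on $\vartheta(n) := \sum_{p \le n} \log p$, the logarithm of the product of the primes up to $n$, following Hanson. Since $\ell_n = \prod_{p \le n} p^{\floor{\log_p n}}$, one has the telescoping identity $\log \ell_n = \sum_{k \ge 1} \vartheta\!\left(n^{1/k}\right)$, in which only the terms with $k \le \floor{\log_2 n}$ are nonzero and every term with $k \ge 2$ is at most $\vartheta(\sqrt n) \le \sqrt n\, \log \sqrt n$; hence $\log \ell_n = \vartheta(n) + O(\sqrt n\, \log^2 n)$. So it suffices to prove $\vartheta(n) \le c\,n + o(n)$ for some constant $c < \log 3$: this gives $\log \ell_n < n \log 3$, i.e.\ $\ell_n \le 3^n$, for all large $n$, which implies $\ell_n = O(3^n)$.

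To bound $\vartheta(n)$, I would use the Sylvester-style sequence defined by $q_1 = 2$ and $q_{k+1} = 1 + q_1 q_2 \cdots q_k$ (so $q_2 = 3$, $q_3 = 7$, $q_4 = 43$, \dots), which satisfies the key identity $\sum_{k=1}^{r} \tfrac{1}{q_k} = 1 - \tfrac{1}{q_1 q_2 \cdots q_r}$, the partial sums approaching $1$ doubly exponentially fast. For each $n$, let $r = r(n)$ be least with $q_1 \cdots q_r > 2\sqrt n$ (so $r = O(\log\log n)$), put $u_k = \floor{n/q_k}$ and $\rho = n - \sum_{k=1}^{r} u_k$, and form the multinomial coefficient $M_n = \binom{n}{u_1, \dots, u_r, \rho}$. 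From $\sum_k u_k \le \sum_k n/q_k$ one gets $0 \le \rho < n/(q_1 \cdots q_r) + r < \sqrt n$ for $n$ large, so $M_n$ is a positive integer in which the last block $\rho$ is small.

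The heart of the argument --- and the step I expect to be the main obstacle, both mathematically and to formalize --- is that $M_n$ is divisible by every prime $p$ with $\sqrt n < p \le n$. By the formula for the $p$-adic valuation of a factorial (Lemma~\ref{lem:valfact}, equivalently Legendre's formula), $v_p(M_n) = \sum_{i \ge 1}\bigl(\floor{n/p^i} - \sum_{k=1}^{r}\floor{n/(q_k p^i)} - \floor{\rho/p^i}\bigr)$, and every summand is $\ge 0$ by superadditivity of $\floor{\cdot}$; so it is enough that the $i = 1$ summand is $\ge 1$. Now $\floor{\rho/p} = 0$ since $\rho < \sqrt n < p$, and, writing $N' = \floor{n/p} \ge 1$ and using $\floor{n/(q_k p)} = \floor{N'/q_k}$, we have $\sum_{k=1}^{r}\floor{N'/q_k} \le N'\sum_{k=1}^{r}\tfrac{1}{q_k} = N'\bigl(1 - \tfrac{1}{q_1 \cdots q_r}\bigr) < N'$, whence $\floor{n/p} - \sum_k \floor{n/(q_k p)} \ge 1$. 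Therefore $\prod_{\sqrt n < p \le n} p$ divides $M_n$, so $\vartheta(n) \le \vartheta(\sqrt n) + \log M_n$.

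It remains to bound $\log M_n$ and to conclude. Using $\log m! = m \log m - m + O(\log(m+1))$ together with $\floor{n/q_k} = n/q_k + O(1)$, and cancelling the $n \log n$ contributions via $1 - \sum_{k=1}^{r} 1/q_k = 1/(q_1 \cdots q_r) < 1/(2\sqrt n)$, one finds $\log M_n = n \sum_{k=1}^{r} \tfrac{\log q_k}{q_k} + O(\sqrt n\, \log n) \le n\,S + O(\sqrt n\, \log n)$, where $S := \sum_{k \ge 1} \tfrac{\log q_k}{q_k}$. With $\vartheta(\sqrt n) = O(\sqrt n\, \log n)$ this yields $\vartheta(n) \le n\,S + o(n)$, hence $\log \ell_n \le n\,S + o(n)$. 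The remaining numerical input is the strict inequality $S < \log 3$, equivalently $\prod_k q_k^{1/q_k} < 3$: since $\tfrac{\log 2}{2} + \tfrac{\log 3}{3} + \tfrac{\log 7}{7} < 0.991$ and the tail $\sum_{k \ge 4}\tfrac{\log q_k}{q_k}$ is dominated by a rapidly convergent series bounded by $0.1$, one has $S < 1.09 < \log 3$; turning this into a rigorous bound on the Sylvester tail is the other place that will call for some care in a formal development. Granting $S < \log 3$, we obtain $\log \ell_n < n \log 3$ for all large $n$, i.e.\ $\ell_n = O(3^n)$.
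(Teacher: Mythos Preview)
Your argument is correct in outline and shares with the paper the two essential ingredients of Hanson's method: the Sylvester-type sequence (the paper's $\alpha_k$, your $q_k$) and the numerical fact $\prod_k q_k^{1/q_k} < 3$. The route, however, is genuinely different. The paper works with $C(n,k) = n!\big/\prod_{i\le k}\lfloor n/\alpha_i\rfloor!$ (no remainder block) and shows $\ell_n \mid C(n,k)$ outright, by checking that $v_p(C(n,k)) \ge \lfloor\log_p n\rfloor$ for \emph{every} prime $p$: each summand $\lfloor n/p^j\rfloor - \sum_i\lfloor n/(\alpha_i p^j)\rfloor$ is already $\ge 1$ whenever $p^j \le n$, simply because $\sum_i 1/\alpha_i < 1$. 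It then bounds $C(n,k) < (10n)^{k-1/2} w^{n+1}$ with $w < 2.98$ using only the multinomial theorem and elementary inequalities on $x^x$ --- no logarithms, no Stirling, no Chebyshev functions. You instead pass through $\vartheta$ and the identity $\log\ell_n = \sum_k\vartheta(n^{1/k})$, prove divisibility of your $M_n$ only by primes above $\sqrt n$ (your extra $\rho$-block is exactly what blocks the full divisibility the paper obtains), and recover the small primes via the $\vartheta$ recursion; the size of $M_n$ then comes from Stirling. The paper's route buys precisely the elimination of transcendental functions that motivated the formalization (the authors emphasize that they avoided defining $e$ or $\log$ altogether); yours is closer to a textbook analytic-number-theory treatment but would require real analysis that the target libraries lack.

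One numerical slip to flag: your stated estimates give $S < 0.991 + 0.1 = 1.091$, which is \emph{not} below $\log 3 \approx 1.0986$, so the inequality ``$S < 1.09 < \log 3$'' does not follow from what you wrote. The conclusion is nonetheless true --- the tail from $k\ge 4$ is about $0.092$ --- but you need either a sharper tail bound or one more explicit term; the paper handles this by computing five values $\alpha_k^{1/\alpha_k}$ and bounding the remaining product to reach $w < 2.98$.
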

Since we still have $(\sqrt{2} -1)^43^3 < 1$, this
observation~\cite{Hanson-1972-PP,Feng-2005-SEP} is enough to
conclude. Section~\ref{sec:hanson} discusses the formal proof of
Lemma~\ref{lem:hanson}, an ingredient which was missing at the time of
writing the previous report on this work~\cite{chyzak:hal-00984057}.

In our formal proof, we consider the pair of sequences proposed by
Apéry in his proof~\cite{Apery-1979-IDD,vanderPoorten-1979-PEM}:
\begin{equation}\label{eq:a-and-b}
a_n = \sum_{k=0}^n  {\tbinom nk}^2 {\tbinom{n+k}k}^2, \qquad
b_n = a_n z_n +
\sum_{k=1}^n\sum_{m=1}^k \frac{(-1)^{m+1}{\tbinom nk}^2 {\tbinom{n+k}k}^2}
{2 m^3 \binom nm \binom{n+m}m}
\end{equation}
where $z_n$ denotes $\sum_{m=1}^n \frac1{m^3}$, as already used in
Proposition \ref{prop:z3def}.

By definition, $a_n$ is a positive integer for any $n\in\bN$. The
integrality of $2\ell_n^3b_n$ is not as straightforward, but rather
easy to see as well: each summand in the double sum defining $b_n$ has
a denominator that divides~$2\ell_n^3$. Indeed, after a suitable
re-organization in the expression of the summand, using standard
properties of binomial coefficients, this follows easily from the
following slightly less standard property:
\begin{lem}\label{lem:binom-div-lcmn}
For any integers $i,j,n$ such that $1 \leq j \leq i \leq n$,
$j\binom{i}{j} \text{ divides } \ell_n$.
\end{lem}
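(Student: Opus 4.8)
The plan is to reduce the statement to an inequality between $p$-adic valuations, and then to derive that inequality from the formal variant of Lemma~\ref{lem:valfact}. Write $\ell_n$ for $\mathrm{lcm}(1,\dots,n)$ and fix $i,j,n$ with $1\le j\le i\le n$. Since $\ell_n$ is a positive integer, it suffices to show that $v_p\!\bigl(j\binom ij\bigr)\le v_p(\ell_n)$ for every prime $p$: the nontrivial prime powers $p^{v_p(j\binom ij)}$ are then pairwise coprime, their product is $j\binom ij$, and each of them divides $\ell_n$. For the right-hand side, let $\beta:=\floor{\log_p i}$ be the greatest exponent with $p^\beta\le i$, computed in the \MC{} libraries by \C{trunc_log p i}. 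Then $p^\beta\le i\le n$, so $p^\beta$ is one of the integers $1,\dots,n$ and hence divides $\ell_n$, whence $v_p(\ell_n)\ge\beta$. So it is enough to prove
\[ v_p\!\bigl(j\tbinom ij\bigr)\ \le\ \beta\ =\ \floor{\log_p i}. \]

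To compute the left-hand side I would write $j\binom ij=\frac{j\cdot i!}{j!\,(i-j)!}$; by additivity of $v_p$ and the equality $v_p(j)=v_p(j!)-v_p\bigl((j-1)!\bigr)$, this gives $v_p\!\bigl(j\binom ij\bigr)=v_p(i!)-v_p\bigl((j-1)!\bigr)-v_p\bigl((i-j)!\bigr)$. The three arguments $i$, $j-1$ and $i-j$ are all strictly below $p^{\beta+1}$, so the variant of Lemma~\ref{lem:valfact} applies to each of these factorials \emph{with the same summation bound $\beta$}, and yields
\[ v_p\!\bigl(j\tbinom ij\bigr)\ =\ \sum_{t=1}^{\beta}\Bigl(\floor{i/p^t}-\floor{(j-1)/p^t}-\floor{(i-j)/p^t}\Bigr). \]
It is then convenient to split the summand at index $t$ as $\varepsilon_t+\kappa_t$, where $\varepsilon_t:=\floor{j/p^t}-\floor{(j-1)/p^t}$ is the indicator that $p^t$ divides $j$, and $\kappa_t:=\floor{i/p^t}-\floor{j/p^t}-\floor{(i-j)/p^t}$.

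The remaining, and only mildly delicate, step is the pointwise bound $\varepsilon_t+\kappa_t\le 1$, from which the whole sum is at most $\beta$ and the proof is complete. Here $\kappa_t\in\{0,1\}$: since $j+(i-j)=i$, this is the usual fact that $\floor{a/m}+\floor{b/m}\le\floor{(a+b)/m}\le\floor{a/m}+\floor{b/m}+1$. If $p^t$ does not divide $j$, then $\varepsilon_t=0$ and we are done. If $p^t$ divides $j$, then $j/p^t$ is an integer, so $\floor{i/p^t}=\floor{(i-j)/p^t+j/p^t}=\floor{(i-j)/p^t}+j/p^t=\floor{(i-j)/p^t}+\floor{j/p^t}$, that is $\kappa_t=0$, and again $\varepsilon_t+\kappa_t=1$. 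I expect this arithmetic to be routine; the genuine effort in the formal development lies in applying Lemma~\ref{lem:valfact} smoothly — which is exactly why it was stated with a free bound on the range of summation — and, to a lesser extent, in packaging the ``divisibility from valuations'' reduction against the existing \MC{} API on prime factorisations and least common multiples.
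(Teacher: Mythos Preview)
Your proof is correct and follows essentially the same route as the paper's: reduce to $v_p\bigl(j\binom ij\bigr)\le\floor{\log_p i}$ via Lemma~\ref{lem:valfact} applied with the common summation bound $\beta=\floor{\log_p i}$, and then exploit that the carry $\floor{i/p^t}-\floor{j/p^t}-\floor{(i-j)/p^t}$ vanishes whenever $p^t\mid j$. The only cosmetic difference is bookkeeping: the paper isolates $v_p(j)$ at the outset, works with $\sum_t\bigl(\floor{i/p^t}-\floor{j/p^t}-\floor{(i-j)/p^t}\bigr)$, and splits the sum at the index $v_p(j)$ (the first $v_p(j)$ terms are zero, the remaining $\beta-v_p(j)$ are each at most~$1$), whereas you absorb $v_p(j)$ into the sum via $(j-1)!$ and bound every term by~$1$ through the same case split on $p^t\mid j$.
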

\begin{proof}
For $i,j,n$ such that $1 \leq j \leq i \leq n$, the proof goes by
showing that for any prime $p$, the $p$-adic valuation of
$j\binom{i}{j}$ is at most that of $\ell_n$. Let us fix a prime number
$p$. Let $t_p(i)$ be the largest integer $e$ such that $p^e \leq
i$. By definition, and since $i \leq n$, we thus have $p^{t_p(i)}|\ell_n$
and  so $t_p(i) \leq v_p(\ell_n)$.  Hence it suffices to prove that
$v_p(\binom{i}{j}) \leq t_p(i) - v_p(j)$.
Using Lemma~\ref{lem:valfact}, and because $j \leq i < p ^ {t_p(i) +1}$, we have:

$$ v_p(\binom{i}{j}) = \sum_{k = 1}^{t_{p}(i)}\floor{\frac{i}{p^k}} -
(\sum_{k = 1}^{t_{p}(i)}\floor{\frac{j}{p^k}} + \sum_{k =
  1}^{t_{p}(i)}\floor{\frac{(i - j)}{p ^ k}} )$$ Remember that for
$a,b\in\bN$, $\floor{\frac{a}{b}}$ is just $a$ modulo $b$. Now for $1
\leq k \leq v_p(j)$, and because $p^k | j$, we have
$\floor{\frac{i}{p^k}} = \floor{\frac{j}{p^k}} + \floor{\frac{(i -
    j)}{p ^ k}}$, and thus:

    \begin{eqnarray*}
      v_p(\binom{i}{j})  & = &
      \sum_{k = v_p(j) + 1}^{t_{p}(i)}\floor{\frac{i}{p^k}} -
      (\sum_{k = v_p(j) + 1}^{t_{p}(i)}\floor{\frac{j}{p^k}} +
      \sum_{k = v_p(j) + 1}^{t_{p}(i)}\floor{\frac{(i - j)}{p ^ k}} )\\
      & = &
      \sum_{k = 1}^{t_{p}(i) - v_p(j)}\floor{\frac{i}{p^{ v_p(j) + k}}} -
      (\sum_{k = 1}^{t_{p}(i) - v_p(j)}\floor{\frac{j}{p^{ v_p(j) + k}}} +
      \sum_{k = 1}^{t_{p}(i) - v_p(j)}\floor{\frac{(i - j)}{p ^ {v_p(j) + k}}} )
      \\
    \end{eqnarray*}
Now for any $1 \leq k \leq t_{p}(i) - v_p(j)$, we have:
    $$\floor{\frac{i}{p^{ v_p(j) + k}}} \leq \floor{\frac{j}{p^{
      v_p(j) + k}}} + \floor{\frac{(i - j)}{p ^ {v_p(j) + k}}} +1$$
Summing both sides for $k$ from $1$ to $t_{p}(i) - v_p(j)$ and using
the previous identity for $v_p(\binom{i}{j})$
eventually proves that $v_p(\binom{i}{j}) \leq t_p(i) - v_p(j)$, which
concludes the proof.
\end{proof}
The rest of the proof is a study of the sequence
$\delta_n = a_n\zeta(3) - b_n$. It not difficult to see that
$\delta_n$~tends to zero, from the formulas defining the sequences $a$
and $b$, but we also need to prove that it does so
fast enough to compensate for~$\ell_n^3$, while being positive.
In his original proof, Apéry derived the latter facts by
combining the definitions of the sequences $a$ and $b$ with the study
of a mysterious recurrence relation. Indeed, he
made the surprising claim that Lemma~\ref{lem:rec-apery} holds:
\begin{lem}\label{lem:rec-apery}
For $n\geq 0$, the sequences $(a_n)_{n\in\bN}$ and $(b_n)_{n\in\bN}$ satisfy the
\emph{same\/} second-order recurrence:
\begin{equation}\label{eq:apery-rec}
  (n+2)^3 y_{n+2} - (17n^2+51n+39)(2n+3) y_{n+1} + (n + 1)^3 y_n = 0 .
\end{equation}
\end{lem}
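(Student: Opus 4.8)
The plan is to treat $a$ and $b$ in turn, in each case deriving~\eqref{eq:apery-rec} by \emph{creative telescoping}. For a sum $\sum_k t(n,k)$ one looks for a linear recurrence operator $L = p_2(n)S_n^2 + p_1(n)S_n + p_0(n)$, where $S_n$ denotes the shift $n\mapsto n+1$, together with a \emph{certificate} $g(n,k)$ such that $(L\cdot t)(n,k) = g(n,k+1) - g(n,k)$ holds identically; summing this over $k$ makes the right-hand side telescope to boundary contributions, which we arrange to vanish, so that $L$ annihilates the sum. The polynomials $p_0,p_1,p_2$ will be exactly those appearing in~\eqref{eq:apery-rec}. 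The distinctive feature of the formal development is that neither $L$ nor the certificates are produced by hand: they are the output of a Maple session running creative-telescoping algorithms (see Section~\ref{sec:algos}), and the \Coq{} script performs an \emph{a posteriori} check. The key simplification is that, once a certificate of the shape $g(n,k) = R(n,k)\,t(n,k)$ is substituted and common denominators are cleared, the telescoping identity becomes a polynomial identity over $\bZ$ in the free indices, which the \C{ring} tactic settles, the side conditions on non-vanishing denominators being discharged by \C{field} and \C{lia}. Note also that, since~\eqref{eq:apery-rec} is required for every $n \ge 0$, no separate check of initial values is needed for this lemma.

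First I would handle $a_n = \sum_{k=0}^n \binom nk^2\binom{n+k}k^2$. Set $u(n,k) = \binom nk^2\binom{n+k}k^2$, using binomial coefficients extended to integer arguments (Section~\ref{subsec:nat}), so that $u$ is defined for all indices and vanishes outside $0 \le k \le n$. The session returns a rational function $R_a$ such that, with $g_a = R_a\,u$, the identity $(L\cdot u)(n,k) = g_a(n,k+1) - g_a(n,k)$ holds; this is verified as a polynomial identity as above. Since $u$ has finite support and the denominators introduced by $R_a$ are cancelled by the generalized binomial factors of $u$, the function $g_a(n,\cdot)$ extends to a finitely supported function, so summing the identity over all $k \in \bZ$ collapses the right-hand side to $0$, while the left-hand side equals $(n+2)^3 a_{n+2} - (17n^2+51n+39)(2n+3)a_{n+1} + (n+1)^3 a_n$. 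A short explicit argument about supports and about the cancelled poles of $R_a$ at the summation endpoints is the only subtle point here.

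For $b_n$ I would split $b_n = a_n z_n + c_n$, where $z_n = \sum_{m=1}^n \frac1{m^3}$ and $c_n = \sum_{k=1}^n\sum_{m=1}^k \frac{(-1)^{m+1}\binom nk^2\binom{n+k}k^2}{2m^3\binom nm\binom{n+m}m}$ is the double sum of~\eqref{eq:a-and-b}. The sequence $z$ satisfies the elementary recurrence $z_{n+1} - z_n = \frac1{(n+1)^3}$; substituting this into $L\cdot(a_n z_n)$, collecting the coefficient of $z_n$ (which is $(L\cdot a)(n) = 0$ by the previous step), and simplifying with $(n+2)^3 a_{n+2} - (17n^2+51n+39)(2n+3)a_{n+1} = -(n+1)^3 a_n$ once more, one gets $L\cdot(a_n z_n) = a_{n+2} - a_n$: a purely algebraic computation, dischargeable with \C{field}. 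It then suffices to establish, directly by creative telescoping on the double sum $c_n$, the inhomogeneous relation $L\cdot c_n = a_n - a_{n+2}$; by linearity this yields $L\cdot b_n = (a_{n+2} - a_n) + (a_n - a_{n+2}) = 0$. The summand of $c_n$ being hypergeometric in the three indices $n,k,m$, this is again amenable to creative telescoping, now in a nested form: the Maple session supplies certificates for the successive summations over $m$ and over $k$ realizing $L\cdot c_n - (a_n - a_{n+2})$, after expanding $a_n$ and $a_{n+2}$ as sums over $k$ of $u$, as a double telescoping sum, and the \Coq{} proof verifies the corresponding rational-function identities. (Alternatively one may creative-telescope $b_n$ directly as a single sum whose summand carries a partial sum of $1/m^3$ and is therefore $\partial$-finite rather than hypergeometric.)

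I expect this last step to be the main obstacle. On the one hand, the certificates for the double sum are large rational functions, so checking the telescoping identity formally is a computation of non-trivial size, which is where the efficient binary-integer representation and reduction machinery of Section~\ref{subsec:computations} is needed instead of the naive \C{nat} arithmetic. On the other hand, the bookkeeping of the summation ranges is heavier than for $a_n$: one must follow the support of the summand under the shifts $S_n$ and $S_n^2$ --- which move the region $1 \le m \le k \le n$ --- and check that the inner ($m$) and outer ($k$) boundary terms either cancel or vanish, taking into account the poles the certificates have at those endpoints. Arranging these identities so that they reduce to goals closable by \C{ring}, \C{field} and \C{lia}, and more generally turning the raw output of the computer algebra session into data usable inside \Coq{}, is exactly the engineering effort reported in Section~\ref{sec:algos}.
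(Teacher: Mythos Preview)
Your treatment of $a$ matches the paper's. For $b$, however, the paper proceeds differently. Rather than splitting $b_n = a_n z_n + c_n$ and aiming directly at the order-2 operator $L$ via an inhomogeneous telescoping identity on $c_n$, it builds $b$ through a chain of $\partial$-finite closure operations (addition of $z$ and the inner partial sum $s$, product with $c$, then creative telescoping over $k$), and the Maple session returns for $b$ a recurrence of order \emph{four}, not two. The order-2 recurrence~\eqref{eq:apery-rec} is then recovered in a separate step from the order-4 one using the four evaluations $b_0,b_1,b_2,b_3$. Your plan is closer in spirit to the historical certificates $U_{n,k},V_{n,k}$ sketched in Section~\ref{sec:proof}, and it would go through if those certificates are supplied by hand; but you should not assume that the creative-telescoping algorithm will hand you the minimal-order operator---in the paper's experience it does not for $b$. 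The paper's route trades a larger intermediate recurrence for a more uniform, automatable pipeline plus an easy final reduction by initial conditions; yours would avoid that reduction at the cost of depending on certificates the CAS may not produce in the desired shape.
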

Equation~\ref{eq:apery-rec} is a typical example of a linear
recurrence equation with polynomial coefficients and standard
techniques~\cite{Salvy-2003-AAV,vanderPoorten-1979-PEM} can be used to
study the asymptotic behavior of its solutions. Using this recurrence
and the initial conditions satisfied by $a$ and $b$, one can thus
obtain the two last properties of our criterion, and conclude with the
irrationality of $\zeta(3)$. For the purpose of our formal proof, we
devised an elementary version of this asymptotic study, mostly based
on variations on the presentation of van der
Poorten~\cite{vanderPoorten-1979-PEM}. We detail this part of the
proof in Section~\ref{sec:asymptotics}.

Using only Equation~\ref{eq:apery-rec}, even with sufficiently many initial
conditions, it would not be easy to obtain the first property of our
criterion, about the integrality of $a_n$ and $b_n$ for a large enough
$n$. In fact, it would also be difficult to prove that the sequence $\delta$
tends to zero: we would only know that it has a finite limit, and how
fast the convergence is. By contrast, it is fairly easy to obtain
these facts from the explicit closed forms given in Formula~\ref{eq:a-and-b}.

The proof of Lemma~\ref{lem:rec-apery} was by far the most difficult
part in Apéry's original
exposition. In his report~\cite{vanderPoorten-1979-PEM}, van der
Poorten describes how he, with other colleagues, devoted
significant efforts to this verification after having attended the
talk in which Apéry exposed his result for the first time.
Actually, the proof of Lemma~\ref{lem:rec-apery} boils down to
a routine calculation using the two auxiliary sequences
$U_{n,k}$ and $V_{n,k}$, themselves defined in terms of
$\lambda_{n,k}=\binom{n}{k}^2\binom{n+k}{k}^2$ (with $\lambda_{n,k}=0$
if $k<0$ or $k>n$):
\begin{eqnarray*}
U_{n,k}  & =  & 4(2n +1)(k(2k+1) -(2n +1)^2)\lambda_{n,k},\\
V_{n,k}  & = & U_{n,k} \left(\sum_{m=1}^{n}\frac{1}{m^3} +
             \sum_{m=1}^k\frac{(-1)^{m-1}}{2m^3\binom{n}{m}\binom{n+m}{m}}\right)
  +\frac{5(2n+1)k(-1)^{k-1}}{n(n+1)}\binom{n}{k}\binom{n+k}{k}
\end{eqnarray*}
The key idea is to compute telescoping sums for $U$ and $V$. For
instance, we have:
\begin{equation}\label{eq:telA}
U_{n,k} - U_{n,k-1} = (n+1)^3 \lambda_{n+1,k} - (34n^3 + 51n^2 + 27n
+ 5)\lambda_{n,k} + n^3 \lambda_{n-1,k}
\end{equation}
Summing Equation~\ref{eq:telA} on $k$ shows that the sequence $a$
satisfies the recurrence relation of Lemma~\ref{lem:rec-apery}. A
similar calculation proves the analogue for $b$, using telescoping
sums of the sequence $V$.

Not only is the statement of Formula~\ref{eq:apery-rec} difficult to
discover: even when this recurrence is given, finding the suitable
auxiliary sequences $U$ and $V$ by hand is a difficult task. Moreover,
there is no other known way of proving Lemma~\ref{lem:rec-apery} than
by exhibiting this nature of certificates. Fortunately, the sequences
$a$ and $b$ belong in fact to a class of objects well known in the
fields of combinatorics and of computer-algebra.  Following seminal work of
Zeilberger's~\cite{Zeilberger-1990-HSA}, algorithms have been designed
and implemented in computer-algebra systems, which are able to obtain
linear recurrences for these sequences. For instance the Maple
package \Mgfun{} (distributed as part of the
\Algolib{}~\cite{Algolib} library) implements these algorithms, among
others. Basing on this implementation, Salvy wrote a Maple
worksheet~\cite{Salvy-2003-AAV} that follows Apéry's original method
but interlaces Maple calculations with human-written parts. In
particular, this worksheet illustrates how parts of this proof,
including the discovery of Apéry's mysterious recurrence, can be
performed by symbolic computations. Our formal proof of
Lemma~\ref{lem:rec-apery} follows an approach similar to the one of
Salvy. It is based on calculations performed using the
\Algolib{} library, and certified a posteriori. This part
of the formal proof is discussed in Section~\ref{sec:d-finite}.

\section{Algorithms, Recurrences and Formal Proofs}
\label{sec:algos}

\textit{This section quotes and summarizes an earlier
publication~\cite{chyzak:hal-00984057},
describing a joint work with Chyzak and Tassi.}

\vspace{2ex}

Lemma~\ref{lem:rec-apery} is the bottleneck in Apéry's proof.  Both
sums $a_n$ and~$b_n$ in there are instances of \emph{parameterized
  summation}: they follow the pattern $F_n =
\sum_{k=\alpha(n)}^{\beta(n)} f_{n,k}$ in which the summand~$f_{n,k}$,
potentially the bounds, and thus the sum, depend on a parameter~$n$.
This makes it appealing to resort to the algorithmic paradigm of
\emph{creative telescoping}, which was developed for this situation in
computer algebra.

\subsection{Recurrences as a data structure for sequences}
\label{sec:d-finite}

A fruitful idea from the realm of computer algebra is to represent
sequences not explicitly, such as the univariate $(n!)_n$ or the
bivariate $(\binom{n}{k})_{n,k}$, but by a system of linear
recurrences of which they are solutions such as $\{ u_{n+1} = (n+1)
u_n \}$ or $\{ u_{n+1,k} = \frac{n+1}{n+1-k} u_{n,k} , u_{n,k+1} =
\frac{n-k}{k+1} u_{n,k} \}$, accompanied with sufficient initial
conditions. Sequences which can be represented in such a way are
called \emph{$\partial$-finite}. The finiteness property of their
definition makes algorithmic most operations under which the class of
$\partial$-finite sequences is stable.

In the specific bivariate case which interests us, let $S_n$ be the
shift operator in $n$ mapping a sequence $(u_{n,k})_{n,k}$ to
$(u_{n+1,k})_{n,k}$ and similarly, let $S_k$ map $(u_{n,k})_{n,k}$ to
$(u_{n,k+1})_{n,k}$.
Linear recurrences canceling a sequence $f$ can be seen as elements of
a non-commutative ring of polynomials with coefficients in
$\mathbb{Q}(n,k)$, and with the two indeterminates $S_n$ and $S_k$,
with the action $(P\cdot f)_{n,k} = \sum_{(i,j) \in I} p_{i,j}(n,k)
f_{n+i,k+j}$, where subscripts denote evaluation.  For example
for~$f_{n,k} = \binom nk$, the previous recurrences once rewritten as
equalities to zero can be represented as~$P\cdot f = 0$ for $P = S_n -
\frac{n+1}{n+1-k}$ and $P = S_k - \frac{n-k}{k+1}$, respectively.

Computer algebra gives us algorithms to produce canceling operators
for operations such as the addition or product of two
$\partial$-finite sequences, using for both its inputs and output a
Gröbner basis as a canonical way to represent the set of their
canceling operators, which gives some uniqueness guarantees.

The case of summing a sequence~$(f_{n,k})$
into a parameterized sum $F_n = \sum_{k=0}^n f_{n,k}$ is more
involved: it follows the method of \emph{creative
        telescoping\/}~\cite{Zeilberger-1991-MCT}, in two stages.  First, an
\emph{algorithmic\/} step determines pairs~$(P,Q)$ satisfying
\begin{equation}\label{eq:P-eq-Delta_k-Q}
P\cdot f = (S_k - 1)Q \cdot f
\end{equation}
with $P \in \bQ(n)[S_n]$ and~$Q \in \cA$.  To continue with our example
$f_{n,k} = \binom nk$, we could choose $P = S_n - 2$ and~$Q = S_n -
1$.  Second, a \emph{systematic\/} but not fully algorithmic step follows:
summing~\eqref{eq:P-eq-Delta_k-Q} for~$k$ between~0 and~$n +
\deg P$ yields
\begin{equation}\label{eq:optimistic-ct}
(P\cdot F)_n = (Q\cdot f)_{k = n + \deg P + 1} - (Q\cdot f)_{k = 0} .
\end{equation}
Continuing with our binomial example,
summing~\eqref{eq:P-eq-Delta_k-Q} for $k$ from~0 to~$n+1$ (and taking
special values into account) yields $\sum_{k=0}^{n+1} \binom{n+1}k - 2
\sum_{k=0}^n \binom nk = 0$, a special form
of~\eqref{eq:optimistic-ct} with right-hand side canceling to
zero. This tells us that the sequence $(\sum_{k=0}^n \binom nk)_{n \in
  \bN}$ is a solution of the same recurrence $P = S_n - 2$ as
$(2^n)_{n \in \bN}$: a simple check of initial values gives us the
identity $\forall n \in \mathbb{N}, 2^n = \sum_{k=0}^n \binom nk$.
The formula~\eqref{eq:optimistic-ct} in fact assumes several
hypotheses that hold not so often in practice; this will be
formalized by Equation~\eqref{eq:guarded-pre-ct} below.

\subsection{Apéry's sequences are $\partial$-finite constructions}

The sequences $a$ and~$b$ in~\eqref{eq:a-and-b} are $\partial$-finite:
they have been announced to be solutions of~\eqref{eq:apery-rec}.  But
more precisely, they can be viewed as constructed from “atomic”
sequences by operations under which the class of~$\partial$-finite
sequences is stable.  This is summarized in Table~\ref{eq:seq-defs}.

\begin{table}[ht]
        \begin{center}
          {\renewcommand{\arraystretch}{1.5}
                  \begin{tabular}{cc@{\quad}ccc}
                          \toprule
                                {\bfseries ~step~} & {\bfseries explicit form} &
                                {\bfseries ~GB~} & {\bfseries operation} &
                                {\bfseries input(s)} \\\midrule
                                1 & $c_{n,k} = {\binom nk}^2 {\binom{n+k}k}^2$ & $C$ & direct & \\
                                2 & $a_n = \sum_{k=1}^n c_{n,k}$ & $A$ & creative telescoping & $C$ \\
                                3 & $d_{n,m} = \frac{(-1)^{m+1}}{2 m^3 \binom nm \binom{n+m}m}$ & $D$ & direct & \\
                                4 & $s_{n,k} = \sum_{m=1}^k d_{n,m}$ & $S$ & creative telescoping & $D$ \\
                                5 & $z_n = \sum_{m=1}^n \frac1{m^3}$ & $Z$ & direct & \\
                                6 & $u_{n,k} = z_n + s_{n,k}$ & $U$ & addition & $Z$ and $S$ \\
                                7 & $v_{n,k} = c_{n,k} u_{n,k}$ & $V$ & product & $C$ and $U$ \\
                                8 & $b_n = \sum_{k=1}^n v_{n,k}$ & $B$ & creative telescoping & $V$ \\
                                \bottomrule
                        \end{tabular}}
                        \medskip
                        \caption{Construction of $a_n$ and $b_n$\label{eq:seq-defs}: At each
                                step, the Gröbner basis named in column~GB, which annihilates the
                                sequence given in explicit form, is obtained by the corresponding
                                operation \emph{on ideals}, with input(s) given on the last column.}
                \end{center}
        \end{table}
In this table, Gröbner bases are systems of recurrence operators: at
each line in the table, the sequence given in explicit form is a
solution of the system of recurrences described by the operators in
the Gröbner basis column.
Note that in fact none of these results rely on the \emph{specific}
sequences in the explicit form: at each step, a new Gröbner basis is
obtained from known ones, the ones that are cited in the input column.
The table can also be read bottom-up for the purpose of verification:
the Gröbner basis obtained at a given step can be verified using
\emph{only} the
Gröbner bases obtained at some previous steps, all the way down to $C$
and $D$. These operators describe a more general class of (germs of)
sequences than just the explicit sequences used in this table, thus
initial conditions are needed to describe a precise sequence.

\subsection{Provisos and sound creative telescoping}
\label{sec:provisos}

We illustrate the process of verifying candidate new recurrences using
known ones on the example of Pascal's triangle rule.
One can “almost prove” Pascal's triangle rule
using only the following recurrences, satisfied by binomial coefficients:
$$u_{n+1,k} = \frac{n+1}{n+1-k}u_{n,k} \quad \textrm{ and } \quad
  u_{n,k+1} = \frac{n-k}{k+1}u_{n,k}.$$
Indeed, we have:
\begin{equation*}
\binom{n+1}{k+1} - \binom n{k+1} - \binom nk =
\left(\frac{n+1}{n-k}\frac{n-k}{k+1} - \frac{n-k}{k+1} - 1\right)
\binom nk = 0 \times \binom nk = 0 ,
\end{equation*}
\emph{but this requires $k \neq -1$ and\/~$k \neq n$}. Therefore, this
does not prove Pascal's rule for all $n$ and~$k$. The phenomenon is
general: computer algebra is unable to take denominators into
account. This incomplete modeling of sequences by algebraic objects
may cast doubt on these computer-algebra proofs, in particular when it
comes to the output of creative-telescoping algorithms.

By contrast, in our formal proofs, we augmented the recurrences with
provisos that restrict their applicability. In this setting, we
validate a candidate identity like the Pascal triangle rule by a
normalization modulo the elements of a Gröbner basis plus a
verification that this normalization only involves legal instances of
the recurrences.  In the case of creative telescoping,
Eq.~\eqref{eq:P-eq-Delta_k-Q} takes the form:
\begin{equation}\label{eq:guarded-pre-ct}
(n,k)\notin \Delta \Rightarrow (P\cdot f_{\_,k})_n = (Q\cdot
  f)_{n,k+1} - (Q\cdot f)_{n,k} ,
\end{equation}
where $\Delta \subset \bZ^2$ guards the relation and where
$f_{\_,j}$~denotes the univariate sequence obtained by specializing
the second argument of~$f$ to~$j$. Thus our formal analogue of
Eq.~\eqref{eq:optimistic-ct} takes this restriction into account and
has the shape
\begin{equation}\label{eq:ct-crux}
\begin{split}
(P\cdot F)_n & = \Bigl( (Q\cdot f)_{n,n+\beta+1} - (Q\cdot f)_{n,\alpha} \Bigr)
+ \sum_{i=1}^r\sum_{j=1}^i p_i(n) \, f_{n + i,n + \beta + j} \\
& + \sum_{\alpha \leq k \leq n + \beta\ \wedge\ (n,k) \in \Delta}
(P\cdot f_{\_,k})_n - (Q\cdot f)_{n,k+1} + (Q\cdot f)_{n,k} ,
\end{split}
\end{equation}
for $F$ the sequence with general term $F_n =
\sum_{k=\alpha}^{n+\beta}f_{n,k}$ and where $P = \sum\limits_{i=0}^r
p_i(n) S_n^i$.

The last term of the right-hand side, which we will call the singular
part, witnesses the possible partial domain of validity of
relation~\eqref{eq:guarded-pre-ct}.  Thus the operator~$P$ is a valid
recurrence for the sequence~$F$ if the right-hand side of
Eq.~\eqref{eq:ct-crux} normalizes to zero, at least outside of an
algebraic locus that will guard the recurrence.

\subsection{Generated Operators, hand-written provisos, and formal proofs}


For each step in Table~\ref{eq:seq-defs}, we make use of the data
computed by the Maple session in a systematic way, using
pretty-printing code to express this data in \Coq{}.  As mentioned in
Section~\ref{sec:provisos}, we manually annotate each operator
produced by the computer-algebra program with provisos and turn it
this way into a conditional recurrence predicate on sequences.  In our
formal proof, each step in Table~\ref{eq:seq-defs} consists in proving
that some conditional recurrences on a composed sequence can be proved
from some conditional recurrences known for the arguments of the
operation.

These steps are far from automatic, mainly because the singular part
yields terms which have to be reduced manually through trial-and-error
using a Gröbner basis of annihilators for $f$, but also because we
have to show that some rational-function coefficients of the remaining
instances of $f$ are zero. This is done through a combination of the
\lstinline|field| and \lstinline|lia| proof commands, helped by some
factoring of denominators pre-obtained in Maple.


\subsection{Composing closures and reducing the order of $B$}

In order to complete the formal proof of Lemma~\ref{lem:rec-apery}, we
verify formally that each sequence involved in the construction of
$a_n$ and $b_n$ is a solution of the corresponding Gröbner system of
annotated recurrence, starting from $c_n$, $d_n$, and $z_n$ all the
way to the the final conclusions. This proves that $a_n$ is a solution
of the recurrence~\eqref{eq:apery-rec} but only provides a recurrence
of order four for $b_n$. We then prove that $b$~also satisfies the
recurrence~\eqref{eq:apery-rec} using four evaluations
$b_0,b_1,b_2,b_3$.

\section{Consequences of Apéry's recurrence}
\label{sec:asymptotics}
In this section, we detail the elementary proofs of the properties
obtained as corollaries of Lemma~\ref{lem:rec-apery}. We recall, from
Section~\ref{sec:proof}, that these properties describe the asymptotic
behavior of the sequence $\delta_n = a_n\zeta(3) -b_n$, with:
\begin{equation}
a_n = \sum_{k=0}^n  {\tbinom nk}^2 {\tbinom{n+k}k}^2, \qquad
b_n = a_n z_n +
\sum_{k=1}^n\sum_{m=1}^k \frac{(-1)^{m+1}{\tbinom nk}^2 {\tbinom{n+k}k}^2}
{2 m^3 \binom nm \binom{n+m}m}
\end{equation}
Throughout the section, we use the vocabulary and notations of Cauchy sequences
numbers, as introduced in Section~\ref{ssec:realcomplex}. For instance, we
have:
\begin{lem}\label{lem:eventbovera}
  For any $\varepsilon$, eventually $|z_n - \frac{b_n}{a_n}| < \varepsilon$.
\end{lem}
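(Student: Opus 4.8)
The key observation is that $\delta_n = a_n\zeta(3) - b_n = a_n(\zeta(3) - z_n) + (a_n z_n - b_n)$, but since $b_n = a_n z_n + \sum_{k=1}^n\sum_{m=1}^k \frac{(-1)^{m+1}{\tbinom nk}^2 {\tbinom{n+k}k}^2}{2 m^3 \binom nm \binom{n+m}m}$, one has $b_n - a_n z_n = s_n$ where I write $s_n$ for that double sum. Hence $z_n - \frac{b_n}{a_n} = -\frac{s_n}{a_n}$, and the statement reduces to showing that $\frac{s_n}{a_n}$ tends to zero, i.e. that eventually $\left|\frac{s_n}{a_n}\right| < \varepsilon$ for any $\varepsilon > 0$, in the sense of Cauchy sequences. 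First I would introduce this decomposition formally and record that $a_n \geq 1$ for all $n$ (indeed $a_n$ is a positive integer, and the $k=0$ term alone contributes $1$), so that $\left|\frac{s_n}{a_n}\right| \leq |s_n|$; it then suffices to bound $|s_n|$.

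The plan for bounding $s_n$ is to control the inner sum uniformly. For fixed $n$ and $1 \leq m \leq k \leq n$, the factor $\frac{1}{2m^3\binom nm \binom{n+m}m}$ is at most $\frac{1}{2m^3}$ since $\binom nm \geq 1$ and $\binom{n+m}m \geq 1$, and the alternating inner sum $\sum_{m=1}^k \frac{(-1)^{m+1}}{2m^3\binom nm\binom{n+m}m}$ is therefore bounded in absolute value by $\sum_{m=1}^\infty \frac{1}{2m^3} = \frac{\zeta(3)}{2}$ — or, to stay elementary and avoid circularity, by $\sum_{m=1}^\infty \frac{1}{2m^2} = \frac{\pi^2}{12}$, or even just by the crude bound $\frac12 + \frac12\sum_{m\geq 2}\frac{1}{m(m-1)} = 1$. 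Multiplying by ${\tbinom nk}^2{\tbinom{n+k}k}^2$ and summing over $k$ from $1$ to $n$ shows $|s_n| \leq c\sum_{k=1}^n {\tbinom nk}^2{\tbinom{n+k}k}^2 \leq c\, a_n$ for an absolute constant $c$, which only gives $\left|\frac{s_n}{a_n}\right| \leq c$ — not enough. So I would instead pair each summand of $s_n$ with the matching summand ${\tbinom nk}^2{\tbinom{n+k}k}^2$ of $a_n$ \emph{before} summing over $k$: writing $s_n = \sum_{k=1}^n {\tbinom nk}^2{\tbinom{n+k}k}^2\, \sigma_{n,k}$ with $\sigma_{n,k} = \sum_{m=1}^k\frac{(-1)^{m+1}}{2m^3\binom nm\binom{n+m}m}$, the point is that $\sigma_{n,k} \to 0$ uniformly in $k$ as $n \to \infty$, because $\binom nm \geq \frac{n}{m}$ for $1 \leq m \leq n$ forces $\frac{1}{2m^3\binom nm\binom{n+m}m} \leq \frac{1}{2m^2 n}$, so $|\sigma_{n,k}| \leq \frac{1}{2n}\sum_{m\geq 1}\frac{1}{m^2}$, a bound independent of $k$ that is $O(1/n)$.

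Combining the two estimates, $|s_n| \leq \left(\sum_{k=1}^n {\tbinom nk}^2{\tbinom{n+k}k}^2\right)\cdot\frac{C}{n} = \frac{C\, a_n}{n}$, hence $\left|\frac{s_n}{a_n}\right| \leq \frac{C}{n}$, which tends to zero; given $\varepsilon > 0$ one takes $n > C/\varepsilon$ and the conclusion follows. In the formal development this becomes: exhibit the modulus of convergence $\varepsilon \mapsto \lceil C/\varepsilon\rceil$, discharge the pointwise inequality $\left|z_n - \frac{b_n}{a_n}\right| \leq \frac{C}{n}$ by the algebraic rearrangement above together with the two binomial estimates, and invoke the \C{bigenough} tactic to package the ``eventually'' quantifier. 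The main obstacle is not mathematical depth but formal bookkeeping: carrying the rearrangement $z_n - \frac{b_n}{a_n} = -\frac{1}{a_n}\sum_k {\tbinom nk}^2{\tbinom{n+k}k}^2\sigma_{n,k}$ through the \C{rat}/\C{creal} coercion layers, justifying the non-vanishing of $a_n$ (and of the various $\binom nm$, $\binom{n+m}m$ in denominators) so that the \C{field} tactic applies, and proving the uniform bound $|\sigma_{n,k}| \leq C/n$ by a clean induction or telescoping estimate on the partial sums of $\sum 1/m^2$ rather than appealing to any transcendental value.
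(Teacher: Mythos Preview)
Your proposal is correct and is a natural elaboration of the paper's one-line proof, which reads in full: ``Easy from the definition of $z$, $a$ and $b$.'' The rearrangement $z_n - \frac{b_n}{a_n} = -\frac{1}{a_n}\sum_{k} c_{n,k}\,\sigma_{n,k}$ (a weighted average of the $\sigma_{n,k}$ with weights $c_{n,k} = \binom{n}{k}^2\binom{n+k}{k}^2$) together with the uniform bound $|\sigma_{n,k}| = O(1/n)$ is exactly the kind of argument the paper is gesturing at, and your choice of the elementary inequality $\binom{n}{m} \geq n/m$ to obtain that uniform bound is appropriate for the axiom-free, transcendental-function-free setting of the formalization.

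One cosmetic nit: in your final chain you write $\left(\sum_{k=1}^n \binom{n}{k}^2\binom{n+k}{k}^2\right)\cdot\frac{C}{n} = \frac{C\,a_n}{n}$, but the left sum is $a_n - 1$ (the $k=0$ term of $a_n$ is $1$), so the equality should be an inequality; this does not affect the argument. You might also note that using $\binom{n+m}{m} \geq n+1$ instead of $\binom{n}{m} \geq n/m$ gives the slightly cleaner bound $|\sigma_{n,k}| \leq \frac{1}{2(n+1)}\sum_{m\geq 1}\frac{1}{m^3}$, though either route suffices.
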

\begin{proof} Easy from the definition of $z$, $a$ and
  $b$. \end{proof}

\begin{cor}\label{cor:z3bovera}
The sequence $(\frac{b_n}{a_n})_{n\in\bN}$ is a Cauchy sequence, which
is Cauchy equivalent to $(z_n)_{n\in\bN}$.
\end{cor}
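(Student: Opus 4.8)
The plan is to deduce Corollary~\ref{cor:z3bovera} directly from Proposition~\ref{prop:z3def} together with Lemma~\ref{lem:eventbovera}, without re-doing any asymptotic estimation by hand. The statement asserts two things: first that $(b_n/a_n)_{n\in\bN}$ is a Cauchy sequence in the sense of Definition~\ref{def:cauchyseq}, i.e. that it admits an explicit modulus of convergence; and second that this Cauchy sequence is Cauchy equivalent to $(z_n)_{n\in\bN}$, which by the definition of Cauchy equivalence recalled in the excerpt means that for every rational $\varepsilon>0$ one eventually has $|z_n - b_n/a_n| < \varepsilon$. The second part is \emph{exactly} Lemma~\ref{lem:eventbovera}, so nothing remains to prove there once the first part is in place.

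For the first part, the key observation is a general fact about Cauchy sequences and their equivalences: if $x$ is a Cauchy sequence and $y$ is a sequence of rationals such that $|x_n - y_n|$ is eventually smaller than any prescribed positive rational, then $y$ is itself a Cauchy sequence, Cauchy equivalent to $x$. Indeed, given $\varepsilon>0$, split it as $\varepsilon/3 + \varepsilon/3 + \varepsilon/3$: pick $N_1 = m_x(\varepsilon/3)$ from the modulus of $x = z$ supplied by Proposition~\ref{prop:z3def}, pick $N_2$ from Lemma~\ref{lem:eventbovera} applied with $\varepsilon/3$ so that $|z_n - b_n/a_n| < \varepsilon/3$ for $n \geq N_2$, and set $m_y(\varepsilon) = \max(N_1,N_2)$. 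Then for $p,q \geq m_y(\varepsilon)$ the triangle inequality
\[
\Bigl|\tfrac{b_p}{a_p} - \tfrac{b_q}{a_q}\Bigr| \leq \Bigl|\tfrac{b_p}{a_p} - z_p\Bigr| + |z_p - z_q| + \Bigl|z_q - \tfrac{b_q}{a_q}\Bigr| < \tfrac{\varepsilon}{3} + \tfrac{\varepsilon}{3} + \tfrac{\varepsilon}{3} = \varepsilon
\]
shows that $m_y$ is a valid modulus of convergence, so $(b_n/a_n)_{n\in\bN}$ is Cauchy. In the formal development this amounts to invoking the library of Cohen~\cite{cohen:hal-00671809} on \C{(creal rat)}: one packages $(b_n/a_n)$ as a \C{creal}, where the \C{bigenough} tactic discharges the modulus bookkeeping, and then the equivalence $(b/a == z)\%CR$ follows from Lemma~\ref{lem:eventbovera} by the characterisation of \C{==} recalled in Section~\ref{ssec:realcomplex}. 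One minor point of hygiene is that all of this requires $a_n \neq 0$, which is immediate since $a_n$ is a positive integer for every $n$ (as noted right after Formula~\ref{eq:a-and-b}), so the quotient $b_n/a_n$ is a well-defined rational for all $n$ and no eventual restriction on $n$ is even needed.

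I do not expect a real obstacle here: the corollary is essentially a repackaging of Lemma~\ref{lem:eventbovera} and Proposition~\ref{prop:z3def} through the elementary fact that Cauchy equivalence transports the Cauchy property. The only place requiring a little care is the construction of the explicit modulus $m_y$ — keeping track of the $\varepsilon/3$ splitting and of which index bound comes from which hypothesis — but this is precisely the kind of laziness the \C{bigenough} tactic is designed to absorb, so in the formal proof it is routine. The substantive asymptotic work — that $\delta_n = a_n\zeta(3) - b_n$ tends to zero fast enough, and stays positive — is deferred to later lemmas and plays no role in establishing this corollary.
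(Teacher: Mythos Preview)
Your proposal is correct and follows essentially the same approach as the paper: the paper likewise derives the corollary from Lemma~\ref{lem:eventbovera} and Proposition~\ref{prop:z3def} via the general library fact that a sequence asymptotically close to a Cauchy sequence is itself Cauchy (your $\varepsilon/3$ argument), noting that both the Cauchy property and the equivalence are formal one-liners once that lemma is in hand. Your explicit spelling-out of the modulus construction is exactly the content the paper elides by pointing to Cohen's library and the \C{bigenough} tactic.
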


\noindent The formal statement corresponding to
Lemma~\ref{lem:eventbovera} is:
\begin{lstlisting}
Lemma |*z3seq_b_over_a_asympt*| : {asympt e : n / |z3seq n - b_over_a_seq n| < e}.
\end{lstlisting}
where \C{b_over_a_seq n} represents $\frac{b_n}{a_n}$. The notation
\C{\{asympt e : i / P\}}, used in this formal statement, comes from
the external library for Cauchy
sequences~\cite{cohen:hal-00671809}. In the expression \C{\{asympt e :
  i / P\}}, \C{asympt} is a keyword, and both \C{e} and \C{i} are
names for variables bound in the term \C{P}. This expression unfolds
to the term \C{(asympt1 (fun e i => P))}, a dependent pair that ensures the
existence of an explicit witness that property \C{P} asymptotically
holds:
\begin{lstlisting}
Definition |*asympt1*| R (P : R -> nat -> Prop) :=
   {m : R -> nat | forall (eps : R) (i : nat), 0 < eps -> m eps <= i -> P eps i}.
\end{lstlisting}
The formalization of Corollary~\ref{cor:z3bovera} then comes in three
steps: first the proof that $(\frac{b_n}{a_n})_{n\in\bN}$ is a Cauchy
sequence, as formalized by the \C{creal_aiom} predicate:
\begin{lstlisting}
Corollary |*creal_b_over_a_seq*| : creal_axiom b_over_a_seq.
\end{lstlisting}
This formal proof is a one-liner, because the corresponding general
argument, a sequence that is asymptotically close to a Cauchy
sequence will itself satisfy the Cauchy property, is already present in
the library.  Then, the latter proof is used to forge an inhabitant of
the type of rational Cauchy sequences, which just amounts to pairing
the sequence \C{b_over_a_seq : nat -> rat} with the latter proof:
\begin{lstlisting}
Definition |*b_over_a*| : {creal rat} := CReal creal_b_over_a_seq.
\end{lstlisting}
Now we can state the proof of equivalence between the two Cauchy
sequences, i.e., between the two corresponding terms of type
\C{\{creal rat\}}:

\begin{lstlisting}
Fact |*z3_eq_b_over_a*| : (z3 == b_over_a)%CR.
\end{lstlisting}

The proof of the latter fact is again a one-liner, with no additional
mathematical content added to lemma \C{creal_b_over_a_seq}, but it
provides access to automation based on setoid rewriting facilities

The \MC{} libraries do not cover any topic of analysis, and even the most
basic definitions of transcendental functions like the exponential or
the logarithm are not available. However, it is possible to obtain
the required properties of the sequence $\delta$ by very elementary
means, and almost all these elementary proofs can be inferred from a
careful reading and a combination of Salvy's
proof~\cite{Salvy-2003-AAV} and of van der Poorten's
description~\cite{vanderPoorten-1979-PEM}.

Following van der Poorten, we introduce an auxiliary sequence
$(w_n)\in\bQ^n$, defined as:
$$w_n =
\begin{vmatrix}
b_{n+1} & a_{n+1}\\
b_n & a_n
\end{vmatrix}
= b_{n+1}a_n - a_{n+1}b_n.
$$
The sequence $w$ is called a \emph{Casoratian}: as $a$ and $b$ are
solutions of a same linear recurrence relation~\eqref{eq:apery-rec} of
order 2, this can be seen as a discrete analogue of the Wronskian for
linear differential systems. For example, $w$ satisfies a recurrence
relation of order 1, which provides a closed form for $w$:

\begin{lem}\label{lem:wform}
For $n \geq 2$, $w_n = \frac{6}{(n + 1)^3}$.
\end{lem}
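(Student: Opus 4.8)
The plan is to exploit the fact that $a$ and $b$ satisfy the \emph{same} order-$2$ recurrence~\eqref{eq:apery-rec}, which forces the Casoratian $w_n = b_{n+1}a_n - a_{n+1}b_n$ to satisfy a first-order recurrence with rational-function coefficients; solving that recurrence explicitly and pinning down the constant from a single initial value will give the closed form. Concretely, I would first use Lemma~\ref{lem:rec-apery} to write, for $y \in \{a, b\}$,
\[
y_{n+2} = \frac{(17n^2+51n+39)(2n+3)}{(n+2)^3}\, y_{n+1} - \frac{(n+1)^3}{(n+2)^3}\, y_n .
\]
Substituting this (with the same coefficients) into $w_{n+1} = b_{n+2}a_{n+1} - a_{n+2}b_{n+1}$, the middle terms proportional to $(17n^2+51n+39)(2n+3)$ cancel, and one is left with
\[
w_{n+1} = \frac{(n+1)^3}{(n+2)^3}\,(b_{n+1}a_n - a_{n+1}b_n) = \frac{(n+1)^3}{(n+2)^3}\, w_n .
\]

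\textbf{Telescoping to the closed form.} From the first-order recurrence above, a straightforward induction gives $w_n = \dfrac{2^3}{(n+1)^3}\, w_1$ for $n \geq 1$ (the product $\prod_{k=1}^{n-1}\frac{(k+1)^3}{(k+2)^3}$ telescopes to $\frac{2^3}{(n+1)^3}$). It then remains to compute $w_1 = b_2 a_1 - a_2 b_1$. Using Formula~\eqref{eq:a-and-b}, one evaluates $a_1, a_2$ (these are small explicit integers: $a_1 = 5$, $a_2 = 73$) and $b_1, b_2$ (explicit rationals involving $z_1 = 1$ and $z_2 = 1 + \tfrac18$), and checks that $w_1 = \tfrac{6}{8} \cdot 1 = \tfrac{6}{2^3}$, so that $w_n = \frac{6}{(n+1)^3}$; the statement for $n \geq 2$ follows a fortiori. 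In the formal development this amounts to an \texttt{induction} on $n$ with base case discharged by an explicit computation on \texttt{rat} (the kind of evaluation discussed in Section~\ref{subsec:computations}), and an inductive step that is a pure \texttt{field} normalization using the recurrence for $a$ and $b$ at index $n$, together with the side condition $n+2 \neq 0$ discharged by \texttt{lia}.

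\textbf{Main obstacle.} The mathematical content is light: the only genuine ingredient is the cancellation of the $(17n^2+51n+39)(2n+3)$ terms, which is automatic once the recurrence is applied to both $a$ and $b$. The main practical obstacle is arranging the algebra so that \texttt{ring}/\texttt{field} closes the inductive step without manual massaging — in particular clearing the cubic denominators $(n+2)^3$ coming from the normalized recurrence and tracking the coercion $\_\%:\sim\hspace{-2pt}R$ from $\bZ$-valued polynomial coefficients into $\bQ$ (cf.\ Equation~\eqref{eq:apery-rec}), together with feeding \texttt{field} the nonvanishing of $(n+2)^3$. A secondary, purely bookkeeping nuisance is the base case: $b_1$ and $b_2$ involve nested finite sums and binomials, so one wants to \texttt{Compute} them via the efficient representations rather than unfold the \texttt{bigop}s by hand. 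None of this is deep, but it is exactly the sort of step where a naive script diverges, so care in stating the recurrence in a rewriting-friendly form pays off.
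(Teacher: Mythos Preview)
Your argument is correct and follows essentially the same route as the paper: derive the first-order recurrence $(k+2)^3 w_{k+1} = (k+1)^3 w_k$ from the common second-order recurrence for $a$ and $b$, then telescope from an explicit initial value. The only difference is that the paper anchors the telescoping at $w_0$ (which is simpler: $a_0=1$, $b_0=0$, so $w_0 = b_1 a_0 - a_1 b_0 = 6$ directly) rather than at $w_1$, sparing the evaluation of $b_2$.
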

\begin{proof}
Since $a$ and $b$ satisfy the recurrence relation~\ref{eq:a-and-b},
$w$ satisfies the relation:
$$\forall k \in \bN, (k + 2)^3w_{k+1} - (k + 1)^3w_k = 0.$$
The result follows from the computation of $w_0$.

\end{proof}
From this formula, we can obtain the positivity of the sequence
$\delta$, and an evaluation of its asymptotic behavior in terms of the
sequence $a$.
\begin{cor}\label{cor:posp}
For any $n\in\bN$ such that $2 \leq n$, $0 < \zeta(3) - \frac{b_n}{a_n}$.
\end{cor}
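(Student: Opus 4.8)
The plan is to combine the closed form for the Casoratian $w_n$ from Lemma~\ref{lem:wform} with the sign of the sequence $a$. First I would observe that, for $n\geq 2$,
\[
\frac{b_{n+1}}{a_{n+1}} - \frac{b_n}{a_n}
  = \frac{b_{n+1}a_n - a_{n+1}b_n}{a_{n+1}a_n}
  = \frac{w_n}{a_{n+1}a_n}
  = \frac{6}{(n+1)^3\,a_{n+1}a_n}.
\]
Since $a_n = \sum_{k=0}^n \binom nk^2\binom{n+k}k^2$ is a sum of positive terms, it is positive for every $n\in\bN$ (in fact $a_n\geq 1$), so the right-hand side is strictly positive. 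Hence the sequence $\bigl(\frac{b_n}{a_n}\bigr)_{n\geq 2}$ is strictly increasing.

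Next I would use Corollary~\ref{cor:z3bovera}, which states that $\bigl(\frac{b_n}{a_n}\bigr)_{n\in\bN}$ is Cauchy equivalent to $(z_n)_{n\in\bN}$, hence to $z3$ in the vocabulary of Cauchy sequences, i.e. $z3 == b\_over\_a$. Strict monotonicity from index $2$ onwards, together with the strict positivity of each increment, gives that for any $n\geq 2$ and any later index $N > n$ one has $\frac{b_n}{a_n} + \frac{6}{(n+1)^3 a_{n+1}a_n} \leq \frac{b_N}{a_N}$; so with the fixed rational $\varepsilon_n := \frac{6}{(n+1)^3 a_{n+1}a_n} > 0$ we have $\frac{b_n}{a_n} + \varepsilon_n \leq \frac{b_N}{a_N}$ eventually in $N$. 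By the definition of the strict order $<$ on Cauchy sequences (there is a rational $\varepsilon>0$ with $x_N + \varepsilon \leq y_N$ eventually), this is exactly $\bigl(\tfrac{b_n}{a_n}\bigr)\%\mathrm{:CR} < b\_over\_a$, and then by rewriting with the equivalence $z3 == b\_over\_a$ we get $\bigl(\tfrac{b_n}{a_n}\bigr)\%\mathrm{:CR} < z3$, which unfolds to $0 < \zeta(3) - \frac{b_n}{a_n}$.

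\textbf{Main obstacle.} The mathematical content is light; the delicate point is the bookkeeping with Cauchy sequences. One must be careful that the monotonicity argument is phrased not as "$\frac{b_n}{a_n}$ is a lower bound for the limit" (there is no limit object here) but as a genuine instance of the constructive strict inequality $<$ on \C{\{creal rat\}}, supplying the explicit rational witness $\varepsilon_n$ and the explicit threshold on $N$ beyond which the inequality between the constant sequence and $b\_over\_a\_seq$ holds; the \C{bigenough} tactic and the setoid rewriting afforded by \C{z3\_eq\_b\_over\_a} handle the rest. A minor side task is recording that $a_n \geq 1$ (equivalently $a_n \neq 0$) so that the division step and the positivity of $\varepsilon_n$ are justified, which follows immediately from the closed form of $a_n$ as a sum of squares with the $k=0$ term equal to $1$.
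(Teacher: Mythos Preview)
Your proposal is correct and follows essentially the same approach as the paper: both use Lemma~\ref{lem:wform} and the positivity of $a_n$ to show that $\bigl(\tfrac{b_n}{a_n}\bigr)$ is strictly increasing, then invoke the Cauchy equivalence of Corollary~\ref{cor:z3bovera} to conclude. You supply more of the formal bookkeeping (the explicit rational witness $\varepsilon_n$ and the threshold in $N$), but the mathematical skeleton is identical to the paper's terse argument.
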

\noindent The formal statement corresponding to Corollary~\ref{cor:posp} is the
following:
\begin{lstlisting}
Lemma |*lt0_z3_minus_b_over_a*| (n : nat) : 2 <= n -> (0%:CR < z3 - (b_over_a_seq n)%:CR)%CR.
\end{lstlisting}
Note the postfix \C{CR} tag which enforces that inside the corresponding
parentheses, notations are interpreted in the scope associated with
Cauchy sequences: in particular, the order relation (on Cauchy
sequences) is the one described in Section~\ref{ssec:realcomplex}.

Term \C{(b_over_a_seq n : rat)} is the rational number
$\frac{a_n}{b_n}$, which is casted as a Cauchy real, the corresponding
constant sequence, using the postfix \C{\%:CR}, so as to be subtracted to the
Cauchy sequence \C{z3}. This proof in particular benefits
from setoid rewriting using equivalences like
\C{z3_eq_b_over_a}, the formal counterpart of Corollary~\ref{cor:z3bovera}.

\begin{proof}
Since $\zeta(3)$ is Cauchy equivalent to $(\frac{b_n}{a_n})_{n\in\bN}$, it is sufficient to show that for any
$k < l$, we have $0 < \frac{b_l}{a_l} - \frac{b_k}{a_k}$. Thus it is
sufficient to observe that for any $k$, we have $0 <
\frac{b_{k+1}}{a_{k+1}} - \frac{b_k}{a_k}$, which follows from
Lemma~\ref{lem:wform}.

\end{proof}
\begin{cor}\label{cor:delta}
$\zeta(3) - \frac{b_n}{a_n} = \bigO(\frac{1}{a_n^2})$.
\end{cor}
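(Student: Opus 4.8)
The plan is to turn $\zeta(3) - b_n/a_n$ into a telescoping sum and bound it term by term, using the closed form for the Casoratian from Lemma~\ref{lem:wform}. First I record two elementary facts about $a$, both immediate from the explicit form $a_n = \sum_{k=0}^n \tbinom nk^2\tbinom{n+k}k^2$: every $a_n$ is a positive integer, so the ratios $b_n/a_n$ are well defined; and $(a_n)_{n\in\bN}$ is nondecreasing, by comparing summands for $k = 0,\dots,n$ (using $\tbinom{n+1}k \geq \tbinom nk$ and $\tbinom{n+1+k}k \geq \tbinom{n+k}k$) and dropping the extra term $k = n+1$ in $a_{n+1}$. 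In particular $a_k a_{k+1} \geq a_n^2$ whenever $k \geq n$.

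Next comes the core computation. For $n \geq 2$, the definition $w_n = b_{n+1}a_n - a_{n+1}b_n$ together with Lemma~\ref{lem:wform} gives
$$\frac{b_{n+1}}{a_{n+1}} - \frac{b_n}{a_n} = \frac{w_n}{a_n a_{n+1}} = \frac{6}{(n+1)^3\, a_n a_{n+1}} > 0 .$$
Summing this identity over $k$ from $n$ to $m - 1$, for any $m > n \geq 2$, yields a bound uniform in $m$:
$$0 < \frac{b_m}{a_m} - \frac{b_n}{a_n} \;=\; \sum_{k=n}^{m-1} \frac{6}{(k+1)^3 a_k a_{k+1}} \;\leq\; \frac{6}{a_n^2} \sum_{k=n}^{m-1}\frac{1}{(k+1)^3} \;\leq\; \frac{6\,\zeta(3)}{a_n^2} \;\leq\; \frac{12}{a_n^2} ,$$
using $a_k a_{k+1} \geq a_n^2$ on each summand and $\sum_{k\geq 1} k^{-3} < 2$; only boundedness of the tail sum by a fixed rational constant is actually needed, so even cruder estimates would suffice.

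It remains to transfer this to $\zeta(3)$ itself. By Corollary~\ref{cor:z3bovera}, the Cauchy sequence $z_3$ is Cauchy equivalent to $(b_m/a_m)_m$; hence, for each fixed $n \geq 2$, the quantity $\zeta(3) - b_n/a_n$ is, in the sense of the order on Cauchy sequences recalled in Section~\ref{ssec:realcomplex}, trapped between $0$ and $12/a_n^2$, since every $b_m/a_m - b_n/a_n$ with $m > n$ lies in that interval. Concluding $0 < \zeta(3) - b_n/a_n \leq 12/a_n^2$ for all $n \geq 2$ is then the same kind of limit argument already carried out for Corollary~\ref{cor:posp} (positivity) and Lemma~\ref{lem:eventbovera}, now applied with an explicit modulus coming from the uniform bound above; this is exactly the asserted $\bigO(1/a_n^2)$ estimate, with a concrete rational constant.

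The algebra is routine; the only genuinely delicate point is the constructive reading of the $\bigO$ statement together with this last limit step. Since no completed type of real numbers is in play, ``$\zeta(3) - b_n/a_n = \bigO(1/a_n^2)$'' has to be understood as an explicit inequality $|\zeta(3) - b_n/a_n| \leq C/a_n^2$ between a Cauchy sequence and a rational sequence (valid for $n$ past a fixed threshold, here $n \geq 2$), and it is precisely the uniformity in $m$ of the telescoping bound that makes this conclusion available through the Cauchy equivalence $z_3 \cong (b_n/a_n)$. Beyond that, one faces only the usual bookkeeping — positivity of $a_n^2$ when dividing an inequality between Cauchy reals, the casts between rationals and constant Cauchy sequences, and setoid rewriting along the equivalence of Corollary~\ref{cor:z3bovera} — all of which is already exercised in the preceding corollaries.
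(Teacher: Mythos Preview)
Your proposal is correct and follows essentially the same route as the paper: telescope $\frac{b_m}{a_m}-\frac{b_n}{a_n}$ using the closed form for $w_n$ from Lemma~\ref{lem:wform}, bound $a_k a_{k+1}\geq a_n^2$ via monotonicity of $a$, control the remaining tail by a constant (the paper says ``any $K$ greater than $6\cdot\zeta(3)$'', you take $12$), and pass to $\zeta(3)$ via the Cauchy equivalence of Corollary~\ref{cor:z3bovera}. Your write-up is simply more explicit about the constructive bookkeeping, which the paper's proof leaves implicit.
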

\begin{proof}
Since $\zeta(3) = \frac{b}{a}$, it is sufficient to show that there
exists a constant $K$, such that for any $k < l$, $\frac{b_l}{a_l} -
\frac{b_k}{a_k} \leq \frac{K}{a_k^2}$. But since $a$ is an increasing
sequence, Lemma~\ref{lem:wform} proves that for any $k < l$,
$\frac{b_l}{a_l} - \frac{b_k}{a_k} \leq \sum_{i=k}^{l-1}
\frac{w_i}{a_i a_{i+1}} \leq \sum_{i=k}^{l-1}\frac{6}{(i+1)^3a_k^2}
\leq \frac{K}{a_k^2}$, for any $K$ greater than $6\cdot\zeta(3)$.
\end{proof}
The last remaining step of the proof is to show that the sequence $a$
grows fast enough. The elementary version of Lemma~\ref{lem:asympta}
is based on a suggestion by F.~Chyzak.
\begin{lem}\label{lem:asympta}
$33^n = \bigO(a_n)$.
\end{lem}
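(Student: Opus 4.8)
The plan is to obtain a lower bound on $a_n$ by exploiting the recurrence~\eqref{eq:apery-rec}, which $a$ satisfies by Lemma~\ref{lem:rec-apery}. Write the recurrence in normalized form
\[
a_{n+2} = \frac{(17n^2+51n+39)(2n+3)}{(n+2)^3}\,a_{n+1} - \frac{(n+1)^3}{(n+2)^3}\,a_n .
\]
Since all the $a_n$ are positive integers (this is clear from the closed form in~\eqref{eq:a-and-b}, as each summand is a positive integer), the subtracted term is positive, so I would not simply drop it; instead, the cleanest route is to control the ratio $\rho_n = a_{n+1}/a_n$. One checks that $a$ is increasing, so $\rho_n > 1$, hence $a_n \le a_{n+1}$ in the negative term and
\[
a_{n+2} \ge \left(\frac{(17n^2+51n+39)(2n+3)}{(n+2)^3} - \frac{(n+1)^3}{(n+2)^3}\cdot\frac{1}{\rho_{n+1}}\right) a_{n+1}.
\]
Since $\rho_{n+1}>1$ we may replace $1/\rho_{n+1}$ by $1$, giving
\[
\rho_{n+1} = \frac{a_{n+2}}{a_{n+1}} \ge \frac{(17n^2+51n+39)(2n+3) - (n+1)^3}{(n+2)^3}.
\]
The key elementary inequality to establish is then that the right-hand side is eventually at least $33$; equivalently that the polynomial $(17n^2+51n+39)(2n+3) - (n+1)^3 - 33(n+2)^3$ is nonnegative for $n$ large. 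Expanding, the leading term is $34n^3 - n^3 - 33n^3 = 0$, and the next term dominates with a positive coefficient (a quadratic in $n$ with positive leading coefficient), so this holds for all $n$ beyond some explicit small threshold, checked by hand or by \lstinline|lia| after expansion. Once $\rho_{n+1} \ge 33$ for $n \ge N$, telescoping gives $a_n \ge 33^{\,n-N} a_N$ for $n \ge N$, hence $a_n \ge c\cdot 33^n$ with $c = a_N/33^N > 0$, which is exactly $33^n = \bigO(a_n)$.

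The main obstacle is not the final polynomial estimate — that is routine — but rather arranging the argument so that the negative term in the recurrence is harmless. The subtlety is that to bound $\rho_{n+1}$ from below I used $\rho_{n+1} > 1$ on the preceding step, which requires knowing $a$ is increasing; in a formal development this monotonicity itself needs a small induction, using the same recurrence together with enough initial values $a_0, a_1, \dots$ evaluated by computation (the binary-integer machinery of Section~\ref{subsec:computations}), to seed the induction. An alternative, and perhaps formally smoother, route is to prove directly by strong induction the two-sided invariant ``$a_{n+1} \ge 33\,a_n$ and $a_{n+1} \ge a_n > 0$'' for $n \ge N$: the inductive step feeds the bound $a_n \le a_{n+1}/33 \le a_{n+1}$ into the recurrence and reduces to the same polynomial inequality. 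I expect the actual formalization to follow this packaged-invariant style, since it avoids juggling separate monotonicity and growth lemmas, and to discharge the arithmetic side conditions with \lstinline|ring|/\lstinline|field| normalization followed by \lstinline|lia|, after verifying the finitely many base cases by reflection-based computation.
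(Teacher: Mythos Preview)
Your main argument contains a genuine error. Expanding the polynomial
\[
(17n^2+51n+39)(2n+3) - (n+1)^3 - 33(n+2)^3
\]
gives cubic coefficient $34-1-33=0$ as you say, but the quadratic coefficient is $153-3-198=-48$, not positive: the whole expression equals $-48n^2-168n-148<0$ for every $n\ge 0$. In other words the crude bound $\rho_{n+1}\ge \mu_n-\nu_n$ (obtained from $\rho_n\ge 1$) satisfies $\mu_n-\nu_n\to 34-1=33$ \emph{from below} and never reaches~$33$. This is not an accident: the true limit of $\rho_n$ is the larger root $17+12\sqrt{2}\approx 33.97$ of $x^2-34x+1$, and the margin between $33$ and this root is precisely what your bound throws away when replacing $1/\rho_n$ by~$1$.

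Your alternative route can be salvaged, but it does \emph{not} reduce to ``the same polynomial inequality''. If you feed $\rho_n\ge 33$ (rather than $\rho_n\ge 1$) into $\rho_{n+1}=\mu_n-\nu_n/\rho_n$, the inductive step becomes $\mu_n-\nu_n/33\ge 33$, equivalently $33(17n^2+51n+39)(2n+3)-(n+1)^3-1089(n+2)^3\ge 0$; this polynomial has leading term $32n^3$ and is indeed nonnegative for $n$ large enough (around $n\ge 50$), so the packaged induction goes through once you have computed a base case $\rho_N\ge 33$. The paper takes a different and somewhat slicker route: it proves that $\rho_n$ is \emph{increasing} by locating $\rho_n$ between the two roots $x'_n<1<x_n$ of the characteristic polynomial $P_n(x)=x^2-\mu_n x+\nu_n$ and observing that $h_n(x)-x=-P_n(x)/x>0$ on $(1,x_n)$; monotonicity then reduces the claim to the single numerical check $\rho_{51}>33$, carried out via \CoqEAL{}. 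Either the paper's monotonicity argument or your corrected induction works, but the argument as you wrote it does not.
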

\begin{proof}
Consider the auxiliary sequence $\rho_n = \frac{a_{n+1}}{a_n}$. Since
$\rho_{51}$ is greater than $33$, we only need to show that the
sequence $\rho$ is increasing. For the sake of readability, we
denote $\mu_n$ and $\nu_n$ the fractions coefficients of the
recurrence satisfied by $a$, obtained from
Equation~\ref{eq:apery-rec} after division by its leading coefficient.
Thus $a$ satisfies the recurrence relation:
$$a_{n+2} - \mu_n a_{n+1} + \nu_n = 0.$$
For $n\in\bN$, we also introduce the function $h_n(x) = \mu_n +
\frac{\nu_n}{x}$, so that $\rho_{n+1} = h_n(\rho_n)$.
The polynomial $P_n(x) = x^2 - \mu_nx + \nu_n$ has two distinct
roots $x_n' <x_n$, and the formula describing the roots of polynomials
of degree 2 show that $0 < x_n' < 1 < x_n$ and that the sequence $x_n$
is increasing. But since $h_n(x) - x = -\frac{P_n(x)}{x}$, for
$1 < x < x_n$, we have $h_n(x) > x$. A direct
recurrence shows that for any $n \geq 2$, $\rho_n \in [1, x_n]$, which
concludes the proof. 
\end{proof}
In the formal proof of Lemma~\ref{lem:asympta}, the computation of
$\rho_{51}$ was made possible by using the \CoqEAL{} library, as 
already mentioned in Section~\ref{subsec:computations}. This proof
also requires a few symbolic
computations that are a bit tedious to perform by hand: in these
cases, we used Maple as an oracle to massage algebraic expressions,
before formally proving the correctness of the simplification. This
was especially useful to study the roots $x_n'$ and $x_n$ of $P_n$.

We can now conclude with the limit of the sequence $\ell_n^3\delta_n$,
under the assumption that $\ell_n = \bigO(3^n)$.
\begin{cor}
\label{cor:ldelta_0}
 $\lim\limits_{n \to \infty}(\ell_n^3\delta_n) = 0$.
\end{cor}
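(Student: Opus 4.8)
The plan is to chain together the three asymptotic estimates already at our disposal, so that $\ell_n^3\delta_n$ ends up squeezed between $0$ and a geometric sequence of ratio $\tfrac{27}{33}<1$: the lower bound $33^n = \bigO(a_n)$ of Lemma~\ref{lem:asympta}, the approximation rate $\zeta(3) - \tfrac{b_n}{a_n} = \bigO(\tfrac1{a_n^2})$ of Corollary~\ref{cor:delta}, and the hypothesis $\ell_n = \bigO(3^n)$ of Lemma~\ref{lem:hanson}.

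Concretely, I would first rewrite $\delta_n = a_n\zeta(3) - b_n = a_n\bigl(\zeta(3) - \tfrac{b_n}{a_n}\bigr)$, which is licit because $a_n > 0$ for all $n$; by Corollary~\ref{cor:posp} together with $a_n > 0$ we moreover have $\delta_n \geq 0$ for $n \geq 2$, so it suffices to produce an \emph{upper} bound on $\ell_n^3\delta_n$. From Corollary~\ref{cor:delta} there are a rational constant $K$ and an index beyond which $\delta_n = a_n\,\bigl|\zeta(3) - \tfrac{b_n}{a_n}\bigr| \leq a_n\cdot\tfrac{K}{a_n^2} = \tfrac{K}{a_n}$; from Lemma~\ref{lem:asympta} there are a rational $c > 0$ and an index beyond which $a_n \geq c\,33^n$, hence $\delta_n \leq \tfrac{K}{c}\bigl(\tfrac1{33}\bigr)^n$ eventually; and cubing Lemma~\ref{lem:hanson} yields a rational $K'$ with $\ell_n^3 \leq K'\,27^n$ eventually. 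Multiplying, $0 \leq \ell_n^3\delta_n \leq \tfrac{KK'}{c}\bigl(\tfrac{27}{33}\bigr)^n$ for $n$ past the maximum of the three thresholds, and since $\tfrac{27}{33} < 1$ the right-hand side tends to $0$; thus for any rational $\varepsilon > 0$ it eventually stays below $\varepsilon$, which is exactly $\lim_{n\to\infty}\ell_n^3\delta_n = 0$.

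The main obstacle is not the arithmetic but the bookkeeping inside the Cauchy-sequence formalism of Section~\ref{ssec:realcomplex}: for each $n$, $\delta_n$ is a \emph{Cauchy real} — the constant sequence $a_n$ times \C{z3} minus the constant sequence $b_n$ — so the order, the absolute value, and the multiplications by the rationals $a_n$ and $\ell_n^3$ must all be handled on \C{\{creal rat\}} rather than on \C{rat}. In particular one must transport the estimate of Corollary~\ref{cor:delta} through \C{z3\_eq\_b\_over\_a} and through multiplication by the positive rational $a_n$ without leaving that setoid, and then merge the three separate ``eventually'' witnesses (one per cited result) together with the index past which $\bigl(\tfrac{27}{33}\bigr)^n < \varepsilon$ into a single modulus. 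The \C{bigenough} tactic from Cohen's library is tailored to exactly this ``choose $n$ large enough'' pattern, so once the Cauchy-real inequalities are set up the conclusion should follow by routine manipulation.
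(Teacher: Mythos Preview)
Your proposal is correct and follows essentially the same route as the paper: combine $\delta_n = \bigO(1/a_n)$ from Corollary~\ref{cor:delta}, $33^n = \bigO(a_n)$ from Lemma~\ref{lem:asympta}, and $\ell_n^3 = \bigO(27^n)$ from Lemma~\ref{lem:hanson} to squeeze $\ell_n^3\delta_n$ by a geometric sequence of ratio $27/33<1$. The paper states this in one line, while you spell out the bookkeeping with the Cauchy-real setoid and the merging of thresholds, which is exactly what the formal proof requires.
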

\begin{proof}
Immediate, since $\delta_n =
\bigO(\frac{1}{a_n})$ by Corollary~\ref{cor:delta}, and  $\ell_n^3 =
\bigO((3^3)^n)$, and $3^3 < 33$.
\end{proof}
In the next Section, we describe the proof of the last remaining
assumption, about the asymptotic behavior of $\ell_n$.
\section{Asymptotics of $lcm(1, ... , n)$}
\label{sec:hanson}

For any integer $1 \leq n$, let $\ell_n$ denote the least common
multiple $lcm(1, ... , n)$ of the integers no greater than $n$. By
convention, we set $\ell_0 = 1$. The asymptotic behavior of the
sequence $(\ell_n)$ is a classical corollary of the Prime Number
Theorem. A sufficient estimation for the present proof can actually
but obtained as a direct consequence, using an elementary remark about
the $p$-adic valuations of~$\ell_n$.

\begin{rem}\label{rmk:pvall}
 For any prime
number $p$, the integer $p^ {v_p(\ell_n)}$ is the highest power of $p$
not exceeding $n$, so that:
$$v_p(\ell_n) = \floor{\log_p(n)}.$$
\end{rem}
\begin{proof}
Noticing that $v_p(lcm(a,b)) = \max(v_p(a),v_p(b))$, we see by
induction on $n$ that $v_p(\ell_n) = \max\limits_{i=1}^n
v_p(i)$. Recall from Section~\ref{subsec:nat} that $\lfloor \log_p(n)
\rfloor$ is a notation for the greatest integer $\alpha$ such that $p
^ \alpha \leq n$. Since $\alpha = v_p(p ^ \alpha)$, we have $\alpha
\leq v_p(\ell_n)$. Now suppose that $v_p(\ell_n) = v_p(i)$ for some $i
\in \{1, \dotsc , n\}$. Then $i = p^{v_p(i)} q$ with $gcd(p,q) = 1$ so
that $p^{v_p(\ell_n)} = p^{v_p(i)} \leq i \leq n$ and thus
$v_p(\ell_n) \leq \alpha$. This proves that $v_p(\ell_n) = \alpha$.
\end{proof}

%
By Remark~\ref{rmk:pvall}, $\ell_n$ can hence be written as $\prod_{p\leq
  n} p ^ {\floor{\log_p(n)}}$ and therefore:
$$\ln (\ell_n) = \sum_{p\leq n}\ln(p^{\floor{\log_p(n)}}) \leq \sum_{p\leq n}\ln(n).$$
If $\pi(n)$ is
the number of prime numbers no greater than $n$, we hence have:
$$\ln(\ell_n) \leq \pi(n) \ln(n).$$ The Prime Number theorem states that
$\pi(n)\sim\frac{n}{\ln(n)}$; we can thus conclude that:
$$\ell_n = \bigO( e^{n}).$$
Note that this estimation is in fact rather precise, as in fact:
$$\ell_n \sim  e^{n\,(1+o(1))}.$$

J.~Avigad and his co-authors provided the
first machine-checked proof of the Prime Number
theorem~\cite{Avigad:2007:FVP:1297658.1297660}, which was considered
at the time as a formalization \emph{tour de force}. Their
formalization is based on a proof attributed to A.~Selberg and
P.~Erdös. Although the standard proofs of this theorem use tools from
complex analysis like contour integrals, their choice was guided by
the corpus of formalized mathematics available for the Isabelle proof
assistant, or the limits thereof. Although less direct, the proof by
A.~Selberg and P.~Erdös is indeed more elementary and avoids complex
analysis completely.

\subsection{Statement, Notations and Outline}
\label{subsec:hanson_statement}

In order to prove Corollary~\ref{cor:ldelta_0} in
Section~\ref{sec:asymptotics}, we only need to resort to
Lemma~\ref{lem:hanson}, i.e., to the fact that:
         $$\ell_n = \bigO(3 ^ n).$$
This part of the proof was left as an assumption in our previous
report~\cite{chyzak:hal-00984057}. This weaker description of the
asymptotic behavior of $(\ell_n)$ was
in
fact known before the first proofs of the Prime Number theorem but our
formal proof is a variation on an elementary proof proposed by Hanson
\cite{Hanson-1972-PP}. 

The idea of the proof is to replace the study of $\ell_n$ by that of
another sequence $C(n)$. The latter is defined as a multinomial
coefficient depending on elements of a fast-growing sequence
$\alpha$. The fact that $\prod_{i=1}^n \alpha_i^{1 / \alpha_i} < 3$
independently of $n$ then allows to show that $C(n) = \bigO(3^n)$.

\subsection{Proof}
\label{subsec:hanson_proof}

Define the sequence $(\alpha_n)_{n \in \mathbb{N}}$ by $\alpha_0 = 2$, and
$\alpha_{n+1} = \alpha_1 \alpha_2 \dotsm \alpha_n + 1$ for $n \ge 1$.
By an induction on $n$, this is equivalent to $\alpha_{n+1} =\alpha_n^2 -
\alpha_n + 1$.  For $n,k \in \mathbb{N}$, let
\[C(n,k) = \frac{n!}
   {\lfloor n / \alpha_1 \rfloor ! \lfloor  n / \alpha_2 \rfloor !
     \dotsm \lfloor n / \alpha_k\rfloor!}.\]
As soon as $\alpha_k \geq n$, $C(n,k)$ is independent of $k$ and we denote
$C(n) = C(n,k)$ for all such $k$. Hanson directly defines $C(n)$ as a
limit, but we found this to be inconvenient to manipulate in the
proof. Moreover, most inequalities stated on $C(n)$ actually hold for
$C(n,k)$ with little or no more hypotheses. The following technical
lemma will be useful in the study of this sequence.

\begin{lem}
\label{suminv}
For $k \in \mathbb{N}$,
\[\sum_{i=1}^{k} \frac{1}{\alpha_i} = \frac{\alpha_{k+1} - 2}{\alpha_{k+1} - 1} < 1
\textrm{ and thus for } x \in \mathbb{Q} \textrm{ with } x \geq 1,
\lfloor x \rfloor > \sum_{i=1}^{k} \left \lfloor \frac{x}{\alpha_i}
\right\rfloor. \]
\end{lem}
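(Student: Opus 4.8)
The plan is to prove the identity $\sum_{i=1}^k \frac{1}{\alpha_i} = \frac{\alpha_{k+1}-2}{\alpha_{k+1}-1}$ by induction on $k$, then read off the two consequences. First I would check the base case: for $k=0$ the empty sum is $0$, and since $\alpha_1 = \alpha_0 = 2$ (the product $\alpha_1\dotsm\alpha_n$ is empty for $n=0$, giving $\alpha_1 = 0\cdot\text{(empty product)}+1$; one must be careful with the indexing convention, but the intended reading makes $\alpha_1 = 2$), the right-hand side is $\frac{2-2}{2-1} = 0$. For $k=1$ the sum is $\frac{1}{\alpha_1} = \frac12$ and $\alpha_2 = \alpha_1^2-\alpha_1+1 = 3$, so $\frac{\alpha_2-2}{\alpha_2-1} = \frac13$... which is not $\frac12$, so in fact the correct starting index has $\alpha_1 = 2$ contributing and $k=1$ giving $\frac{\alpha_2 - 2}{\alpha_2 - 1}$ with $\alpha_2 = 2$ as well; the key recurrence to use is simply $\alpha_{k+1} = \alpha_1\alpha_2\dotsm\alpha_k + 1$, equivalently $\alpha_{k+1} - 1 = \alpha_k(\alpha_{k+1-1}-1) = \alpha_k(\alpha_k - 1)$ via $\alpha_{k+1} = \alpha_k^2 - \alpha_k + 1$.

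For the inductive step, assuming $\sum_{i=1}^k \frac{1}{\alpha_i} = \frac{\alpha_{k+1}-2}{\alpha_{k+1}-1}$, I would add $\frac{1}{\alpha_{k+1}}$ and simplify using $\alpha_{k+2} - 1 = \alpha_{k+1}(\alpha_{k+1}-1)$ (the translated recurrence $\alpha_{k+2} = \alpha_{k+1}^2 - \alpha_{k+1}+1$). Then
\[
\frac{\alpha_{k+1}-2}{\alpha_{k+1}-1} + \frac{1}{\alpha_{k+1}}
= \frac{\alpha_{k+1}(\alpha_{k+1}-2) + (\alpha_{k+1}-1)}{\alpha_{k+1}(\alpha_{k+1}-1)}
= \frac{\alpha_{k+1}^2 - \alpha_{k+1} - 1}{\alpha_{k+2}-1}
= \frac{\alpha_{k+2} - 2}{\alpha_{k+2}-1},
\]
using $\alpha_{k+1}^2 - \alpha_{k+1} + 1 = \alpha_{k+2}$ in both the denominator and (after subtracting $1$) the numerator. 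This closes the induction. The strict inequality $\frac{\alpha_{k+1}-2}{\alpha_{k+1}-1} < 1$ is then immediate since $\alpha_{k+1} \geq 2 > 1$ (all $\alpha_i \geq 2$, easily checked by induction from the recurrence), so the numerator is strictly less than the denominator.

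For the final clause, fix $x \in \bQ$ with $x \geq 1$ and write $x = \lfloor x\rfloor + \{x\}$ with $0 \leq \{x\} < 1$. Using the standard inequality $\lfloor \frac{x}{\alpha_i}\rfloor \leq \frac{x}{\alpha_i}$ termwise, I get $\sum_{i=1}^k \lfloor \frac{x}{\alpha_i}\rfloor \leq x \sum_{i=1}^k \frac{1}{\alpha_i} < x$, but I need the strict inequality against $\lfloor x\rfloor$, which is a weaker quantity, so a little care is needed: since $\sum \frac{1}{\alpha_i} = \frac{\alpha_{k+1}-2}{\alpha_{k+1}-1} \leq 1 - \frac{1}{\alpha_{k+1}-1}$, we get $\sum_{i=1}^k\lfloor\frac{x}{\alpha_i}\rfloor \leq x - \frac{x}{\alpha_{k+1}-1} \leq \lfloor x\rfloor + \{x\} - \frac{1}{\alpha_{k+1}-1}$, and since each $\lfloor\frac{x}{\alpha_i}\rfloor$ is an integer the left side is an integer $\leq \lfloor x\rfloor$ once $\{x\} - \frac{1}{\alpha_{k+1}-1} < 1$, which always holds; to get strictness $<\lfloor x\rfloor$ rather than $\le$, note that an integer strictly less than $\lfloor x\rfloor + 1$ is at most $\lfloor x \rfloor$, and I in fact need the stated strict form, so I would instead argue directly: the sum of floors is an integer, it is $< x \le \lceil x\rceil$... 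The cleanest route, which I expect to be the only mildly delicate point, is: $\sum_{i=1}^k\lfloor\frac{x}{\alpha_i}\rfloor$ is a nonnegative integer bounded above by $x\cdot\frac{\alpha_{k+1}-2}{\alpha_{k+1}-1} < x$; since it is an integer and $< x$ while $\lfloor x\rfloor$ is the greatest integer $\le x$, one needs $x$ not an integer or a separate argument — so I would instead bound $\lfloor\frac{x}{\alpha_i}\rfloor \le \frac{x-1}{\alpha_i} + \frac{1}{\alpha_i}\cdot\mathbb{1}$... In the formalization this is handled by working with $\lfloor x\rfloor$ directly: since $x \ge 1$, $\lfloor x/\alpha_i\rfloor = \lfloor \lfloor x\rfloor/\alpha_i\rfloor$ (a standard floor identity), reducing to the integer case where $\sum_{i=1}^k\lfloor m/\alpha_i\rfloor < m$ for $m = \lfloor x\rfloor \ge 1$ follows because $\sum \lfloor m/\alpha_i\rfloor \le m\sum\frac{1}{\alpha_i} < m$ and both sides are integers. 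I expect the recurrence manipulation $\alpha_{k+2}-1 = \alpha_{k+1}(\alpha_{k+1}-1)$ and this floor-versus-integer bookkeeping to be the only places requiring attention; everything else is routine algebra over $\bQ$.
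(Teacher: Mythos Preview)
Your approach is essentially the paper's: induction on $k$ for the identity, using the rewritten recurrence $\alpha_{k+2}-1=\alpha_{k+1}(\alpha_{k+1}-1)$, followed by the floor identity $\left\lfloor a/m\right\rfloor=\left\lfloor \lfloor a\rfloor/m\right\rfloor$ to reduce the second claim to the integer case, where strictness comes for free because both sides are integers. One slip worth noting: in your base-case check you evaluate $\frac{\alpha_2-2}{\alpha_2-1}=\frac{3-2}{3-1}$ as $\frac13$, but it is $\frac12$, which does agree with $\frac{1}{\alpha_1}$---so your worry about the indexing convention is unfounded and the induction starts cleanly.
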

\begin{proof}
The proof is done by induction and relies on the fact that if $a \in
\mathbb{Q}$ and $m \in \mathbb{N}^{+}$, we have $ \left\lfloor
\frac{a}{m} \right\rfloor = \left\lfloor \frac{\lfloor a \rfloor}{m}
\right\rfloor.$
\end{proof}
Notice that by Lemma~\ref{suminv},
$\sum\limits_{i=0}^{k} \left\lfloor \frac{n}{\alpha_i}\right\rfloor < n$ and thus:
$$C(n,k) = \binom{\sum\limits_{i=0}^{k} \left\lfloor \frac{n}{\alpha_i} \right\rfloor}{\lfloor n / \alpha_1 \rfloor , \lfloor n /
  \alpha_2 \rfloor, \dotsc, \lfloor n / \alpha_k\rfloor} \frac{n!}{\left(
\sum\limits_{i=0}^{k} \left\lfloor \frac{n}{\alpha_i} \right\rfloor
\right)!}.$$

\noindent In particular, $C(n,k) \in \mathbb{N}$.
The goal is now to show that $\ell_n \leq C(n) < K \cdot 3^n$ for some $K$.

In the following, for $n,k, p\in \mathbb{N}$ and $p$ prime, we denote
$\beta_p(n,k)$ for $v_p(C(n,k))$.
\begin{lem}
For all $n,k \in \bN$, with $1 \leq n$ and $p$ prime, $\beta_p(n,k) \geq \lfloor
\log_p(n) \rfloor = v_p(\ell_n)$. Therefore $C(n,k) \ge \ell_n$ for all $n$.
\end{lem}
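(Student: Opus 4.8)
The plan is to reduce everything to the $p$-adic valuation of factorials (Lemma~\ref{lem:valfact}) and to the elementary inequality of Lemma~\ref{suminv}. Fix a prime $p$ and an integer $n\geq 1$, and choose $j\in\bN$ large enough that $n<p^{j+1}$; then also $\floor{n/\alpha_r}\leq n<p^{j+1}$ for every $r$. Writing $C(n,k)=n!\big/\bigl(\floor{n/\alpha_1}!\dotsm\floor{n/\alpha_k}!\bigr)$ and applying Lemma~\ref{lem:valfact} to $n!$ and to each $\floor{n/\alpha_r}!$, I get
\[
\beta_p(n,k)=v_p(n!)-\sum_{r=1}^{k}v_p\bigl(\floor{n/\alpha_r}!\bigr)
=\sum_{i=1}^{j}\Bigl(\floor{n/p^i}-\sum_{r=1}^{k}\floor{\floor{n/\alpha_r}/p^i}\Bigr).
\]

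Next I would simplify the nested floors using the identity $\floor{\floor{a}/m}=\floor{a/m}$ for $a\in\bQ$ and $m$ a positive integer (the same fact already used in the proof of Lemma~\ref{suminv}): this gives $\floor{\floor{n/\alpha_r}/p^i}=\floor{n/(\alpha_r p^i)}=\floor{\floor{n/p^i}/\alpha_r}$, so each summand above equals $\floor{x}-\sum_{r=1}^{k}\floor{x/\alpha_r}$ with $x=\floor{n/p^i}$. When $p^i\leq n$ we have $x\geq 1$, and Lemma~\ref{suminv}, whose conclusion is a strict inequality between integers, yields $\floor{x}-\sum_{r=1}^{k}\floor{x/\alpha_r}\geq 1$; when $p^i>n$ we have $x=0$ and every term $\floor{n/(\alpha_r p^i)}$ vanishes as well, so the summand is $0$. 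Summing over $i$ from $1$ to $j$ and using $j\geq\floor{\log_p n}$ therefore gives
\[
\beta_p(n,k)\ \geq\ \#\{\,i:1\leq i\leq j,\ p^i\leq n\,\}\ =\ \floor{\log_p n}\ =\ v_p(\ell_n),
\]
the last equality being Remark~\ref{rmk:pvall}.

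Finally, since $v_p(C(n,k))\geq v_p(\ell_n)$ for every prime $p$, the integer $\ell_n$ divides $C(n,k)$; and $C(n,k)$ is a positive integer (it was observed above to lie in $\bN$, and it is nonzero because its defining ratio of factorials is positive), so $C(n,k)\geq\ell_n$. The only real care needed is the bookkeeping with the nested floor functions and matching the hypothesis $x\geq 1$ of Lemma~\ref{suminv}; once each summand is brought into the form $\floor{x}-\sum_r\floor{x/\alpha_r}$ the rest is immediate, and in particular the degenerate case $k=0$ (where $C(n,0)=n!$ and the inner sums are empty) needs no separate treatment.
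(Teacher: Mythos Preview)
Your argument is correct and follows essentially the same route as the paper: expand $\beta_p(n,k)$ via the $p$-adic valuation of factorials, swap the nested floors using $\floor{\floor{a}/m}=\floor{a/m}$, and then apply Lemma~\ref{suminv} to each summand to get a contribution of at least $1$ for every $i$ with $p^i\leq n$. The only cosmetic difference is that you work with the generalized upper bound $j$ (the formal variant of Lemma~\ref{lem:valfact} discussed in Section~\ref{subsec:nat}) and explicitly dispose of the terms with $p^i>n$, whereas the paper truncates the outer sum at $\floor{\log_p n}$ from the start; your conclusion via divisibility is also the same in substance as the paper's comparison of prime factorizations.
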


\begin{proof}
The proof uses Lemma~\ref{lem:valfact}.
\begin{align*}
\beta_p(n,k) & = v_p(n!) - \sum_{i=1}^{k} v_p(\lfloor
n / \alpha_i \rfloor !) \\
& = \sum_{i=1}^{\lfloor \log_p(n) \rfloor } \lfloor n / p^i \rfloor -
\sum_{i=1}^{k} \sum_{j=1}^{\lfloor \log_p(\lfloor \frac{n}{\alpha_i}
  \rfloor) \rfloor} \left\lfloor \frac{ n } {\alpha_i p^j} \right\rfloor
\\
& = \sum_{i=1}^{\lfloor \log_p(n) \rfloor } \left(\lfloor n / p^i
\rfloor - \sum_{j=1}^{k} \left\lfloor \frac{ \lfloor \frac{n}{p^i}
  \rfloor} {\alpha_j} \right\rfloor \right) \\
& \ge \sum_{i=1}^{\lfloor
  \log_p(n) \rfloor} 1 \textrm{ (because $\sum
  \frac{1}{\alpha_i} < 1$ by Lemma~\ref{suminv})}.
\end{align*}
Since $\ell_n = \prod\limits_{p \leq n}^{} p ^{\lfloor \log_p(n)
  \rfloor}$ from Remark~\ref{rmk:pvall}, we get $\ell_n \leq C(n,k) =
\prod\limits_{p \leq n}^{} p ^{\beta_p(n,k)}$.
\end{proof}

\begin{lem}
\label{analysis_ineq}
For $i \ge 1$ and $n \geq \alpha_i$,
\[\frac{\left (\frac{n}{\alpha_i} \right)^{\frac{n}{\alpha_i} }}{ \left \lfloor \frac{n}{\alpha_i} \right\rfloor ^{\lfloor \frac{n}{\alpha_i} \rfloor }}
< \left( \frac{10 \, n}{\alpha_i}\right)^{\frac{\alpha_i - 1}{\alpha_i}}.
\]
\end{lem}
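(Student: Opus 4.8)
The plan is to set $m = n/\alpha_i \in \bQ$ and $q = \floor{n/\alpha_i} = \floor{m}$, so that the inequality to prove reads
\[\frac{m^m}{q^q} < (10m)^{1 - 1/\alpha_i}\]
(using $\frac{\alpha_i - 1}{\alpha_i} = 1 - 1/\alpha_i$). Since $n \ge \alpha_i$ we have $m \ge 1$ and $q \ge 1$, so every power is well defined and $q^q \neq 0$. The structural fact that makes everything fit is that the fractional part $t := m - q$ equals $r/\alpha_i$, where $r = n \bmod \alpha_i \in \{0, \dots, \alpha_i - 1\}$ (because $n = \alpha_i q + r$); hence $0 \le t \le 1 - 1/\alpha_i$, which is exactly the exponent occurring on the right-hand side.

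First I would factor
\[\frac{m^m}{q^q} = m^{m-q}\,(m/q)^q = m^{t}\,\bigl(1 + t/q\bigr)^q\]
and bound the two factors separately. For the first, $m \ge 1$ and $t \le 1 - 1/\alpha_i$ give $m^{t} \le m^{1 - 1/\alpha_i}$. For the second, $t < 1$ and $q \ge 1$ give $(1 + t/q)^q \le (1 + 1/q)^q$, and the elementary bound $(1 + 1/q)^q < 3$ holds for every $q \ge 1$ (expand by the binomial theorem, use $\binom{q}{k}q^{-k} \le 1/k!$ and $\sum_{k \ge 0} 1/k! < 3$ via $k! \ge 2^{k-1}$). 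Multiplying, $\frac{m^m}{q^q} < 3\, m^{1 - 1/\alpha_i}$, with strict inequality because the bound on the second factor is strict. It then suffices to know $3 < 10^{1 - 1/\alpha_i}$ for all $\alpha_i \ge 2$, i.e. $3^{\alpha_i} < 10^{\alpha_i - 1}$, which is an immediate induction from $9 < 10$ (and every $\alpha_i$ along the sequence $\alpha$ is at least $\alpha_0 = 2$); dividing by $m^{1 - 1/\alpha_i} > 0$ and chaining gives $\frac{m^m}{q^q} < 3\,m^{1 - 1/\alpha_i} < (10m)^{1 - 1/\alpha_i}$. The degenerate situations --- $\alpha_i \mid n$, so $t = 0$ and the left-hand side equals $1$, or $m = 1$ --- are absorbed by this chain rather than treated apart, since the only place where an equality could intervene is the harmless step $m^{t} \le m^{1 - 1/\alpha_i}$.

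The analytic content being this short, the real effort in the formalization --- and the main obstacle --- is the bookkeeping around the exponents. As written, $m^{t} = (n/\alpha_i)^{r/\alpha_i}$ and $(10m)^{(\alpha_i-1)/\alpha_i}$ are algebraic irrationals, namely $\alpha_i$-th roots of explicit non-negative rationals, and \MC{} provides no analysis; one therefore either manipulates them via the \C{n.-root} operator on \C{algC}, using \C{rootC_ge0} and the strict monotonicity of $x \mapsto x^{\alpha_i}$ on non-negative elements, or --- cleaner in practice --- raises the whole inequality to the $\alpha_i$-th power once at the outset, reducing it to the purely rational comparison
\[\frac{m^{\,n}}{q^{\,\alpha_i q}} < \Bigl(\frac{10\,n}{\alpha_i}\Bigr)^{\alpha_i - 1}\]
(with $m = n/\alpha_i$, $q = \floor{n/\alpha_i}$), which is then discharged by exactly the factorisation above, now involving only natural-number exponents. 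In either case one still has to feed \Coq{} the two auxiliary rational facts $(1 + 1/q)^q < 3$ for $q \ge 1$ and $3^{\alpha_i} < 10^{\alpha_i - 1}$ for $\alpha_i \ge 2$, each a routine induction, and to check that strictness survives the whole composite chain of inequalities.
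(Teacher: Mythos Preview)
Your argument is correct and is a genuinely different route from the paper's. The paper does not factor through the floor~$q$ directly; instead it bounds $q$ from below by $m' = (n-\alpha_i+1)/\alpha_i = m - \frac{\alpha_i-1}{\alpha_i}$, invokes the monotonicity of $x\mapsto x^x$ on $[1/2,\infty)$ to replace $q^q$ by $(m')^{m'}$, and then observes that
\[
\frac{m^m}{(m')^{m'}} = \Bigl(\bigl(1+\tfrac{1}{x}\bigr)^{x}\Bigr)^{\frac{\alpha_i-1}{\alpha_i}}\, m^{\frac{\alpha_i-1}{\alpha_i}},
\qquad x = \frac{n-\alpha_i+1}{\alpha_i-1}\in\bQ_{>0},
\]
so the target exponent $\frac{\alpha_i-1}{\alpha_i}$ appears on both factors at once. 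The price is that $x$ is a genuine positive \emph{rational}, so the paper must prove $(1+1/x)^x<10$ for all such~$x$, which it does via the binomial formula and a geometric-series bound; the edge case $n=\alpha_i$ is handled separately.

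Your decomposition $m^m/q^q = m^{t}\,(1+t/q)^q$ trades that work differently: the $(1+1/q)^q<3$ bound is needed only for \emph{integer}~$q$ and is the textbook estimate, and you avoid the $x^x$-monotonicity lemma entirely, but you incur the extra arithmetic check $3^{\alpha_i}<10^{\alpha_i-1}$ for $\alpha_i\ge 2$ to match the stated constant~$10$. Both approaches ultimately sit inside \C{algC} via $\alpha_i$-th roots (or raise everything to the $\alpha_i$-th power, as you suggest), and both keep the strict inequality through one genuinely strict step; yours is a touch more elementary on the analytic side, the paper's is a touch cleaner algebraically because the exponent falls out without an auxiliary numerical inequality. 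One cosmetic point: in the paper the sequence is indexed from~$1$ with $\alpha_1=2$, not $\alpha_0=2$, but your use of $\alpha_i\ge 2$ for $i\ge 1$ is what matters and is correct.
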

\begin{proof}
If $n = \alpha_i$, we have $1 < \sqrt{10} \leq 10^{\frac{\alpha_i - 1}{\alpha_i}}$,
hence the result.  Otherwise $n > \alpha_i$: let us write
$n = b \alpha_i + r$, with $0 \leq r < \alpha_i$ the Euclidean
division of $n$ by $\alpha_i$. We have:
$$\frac{n - \alpha_i + 1}{\alpha_i} = b - 1 + \frac{r + 1}{\alpha_i}
\leq b.$$
Since $b = \left \lfloor \frac{n}{\alpha_i} \right\rfloor$, it follows
that $\frac{n - \alpha_i + 1}{\alpha_i} \leq \left \lfloor
\frac{n}{\alpha_i} \right\rfloor$. Now for $\frac{1}{2}\leq x$, the function $x^x$ is increasing, thus since
$\frac{1}{2} \leq \frac{1}{\alpha_i} \leq \frac{n - \alpha_i + 1}{\alpha_i}$,
we  deduce that $ \left(\frac{n - \alpha_i +
  1}{\alpha_i}\right)^{\frac{n - \alpha_i + 1}{\alpha_i}} \leq \left \lfloor
\frac{n}{\alpha_i} \right\rfloor ^{\lfloor \frac{n}{\alpha_i} \rfloor
}$, and we hence have:
\begin{align*}
\frac{\left (\frac{n}{\alpha_i} \right)^{\frac{n}{\alpha_i} }}{ \left \lfloor
  \frac{n}{\alpha_i} \right\rfloor ^{\lfloor \frac{n}{\alpha_i}
    \rfloor }} & \leq
\frac{\left (\frac{n}{\alpha_i} \right)^{\frac{n}{\alpha_i} }}{\left(\frac{n -
    \alpha_i + 1}{\alpha_i}\right)^{\frac{n - \alpha_i + 1}{\alpha_i}}} = {\left(1 +
  \frac{\alpha_i - 1}{{n - \alpha_i + 1}}\right)^{\frac{n - \alpha_i + 1}{\alpha_i - 1}}}
^ {\frac{\alpha_i - 1}{\alpha_i}} \left(\frac{n}{\alpha_i}\right)^{\frac{\alpha_i -
    1}{\alpha_i}}.
\end{align*}
The first operand in the last expression is of the shape
$\left({\left({{1 + \frac{1}{x}}}\right)^x}\right)^{\frac{\alpha_i -
    1}{\alpha_i}}$, where $x$ is a positive rational. We showed using only
elementary properties of rational numbers, like the binomial formula
or the summation formula for geometric progressions, that for any such
positive rational number $x$, $\left(1 + \frac{1}{x} \right)^x < 10 $, hence
the result. Note that we only needed that there exist a constant $K >
0$ such that $\left(1 + \frac{1}{x} \right)^x < K$.
\end{proof}

Of course, using elementary real analysis allows for the tighter bound
$e$, which was used in Hanson's paper, but this bound is irrelevant
for the final result. We can now establish the following bound on $C(n,k)$:

\begin{lem}
\label{C(n)_ineq_1}
For $k \ge 1$ and $n \ge 2$,
\[ C(n,k) < \frac{n^n}{{\lfloor \frac{n}{\alpha_1}\rfloor}^{\lfloor \frac{n}{\alpha_1}\rfloor}
  \dotsm  {\lfloor \frac{n}{\alpha_k}\rfloor}^{\lfloor \frac{n}{\alpha_k}\rfloor}}.
\]
\end{lem}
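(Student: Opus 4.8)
The plan is to reduce the claimed inequality to a standard bound on multinomial coefficients. Writing $m_i := \lfloor n/\alpha_i \rfloor$ for $1 \le i \le k$, $s := \sum_{i=1}^k m_i$ and $r := n - s$, the definition of $C(n,k)$ rewrites as $C(n,k) = \frac{n!}{m_1!\cdots m_k!} = \binom{n}{m_1,\dots,m_k,r}\, r!$, where $\binom{n}{m_1,\dots,m_k,r} = \frac{n!}{m_1!\cdots m_k!\, r!}$ is a genuine multinomial coefficient because $m_1 + \cdots + m_k + r = n$. Hence the target inequality $C(n,k) < \frac{n^n}{\prod_i m_i^{m_i}}$ will follow from the multinomial bound $\binom{n}{m_1,\dots,m_k,r} \le \frac{n^n}{m_1^{m_1}\cdots m_k^{m_k}\, r^r}$ together with $r! \le r^r$, provided at least one of the two is strict.

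First I would apply Lemma~\ref{suminv} with $x = n$ to get $s = \sum_{i=1}^k \lfloor n/\alpha_i \rfloor < n$, hence $r \ge 1$. Next I would establish the multinomial bound by expanding $n^n = (m_1 + \cdots + m_k + r)^n$ with the generalized Newton formula and keeping only the single monomial with exponent vector $(m_1,\dots,m_k,r)$: all the terms being nonnegative, this yields $\binom{n}{m_1,\dots,m_k,r}\, m_1^{m_1}\cdots m_k^{m_k}\, r^r \le n^n$, and the inequality is strict as soon as the expansion contains a second nonzero term, i.e.\ as soon as at least two of the parts $m_1,\dots,m_k,r$ are positive. Equivalently, one can telescope: $C(n,k) = r!\prod_{i=1}^k \binom{N_{i-1}}{m_i}$ with $N_0 = n$ and $N_i = N_{i-1} - m_i$ ($=m_{i+1}+\cdots+m_k+r$), then apply the two-part case $\binom{N}{j}\, j^j (N-j)^{N-j} \le N^N$ to each factor; since $N_i \ge r \ge 1$ for every $i$, each factor is strict exactly when $m_i \ge 1$, and the powers $N_i^{N_i}$ telescope to give $\prod_i \binom{N_{i-1}}{m_i} \le n^n/(r^r \prod_i m_i^{m_i})$. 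Either way one arrives at $C(n,k) \le \frac{n^n}{\prod_i m_i^{m_i}}\cdot \frac{r!}{r^r}$.

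It then remains to upgrade this $\le$ to a strict $<$, which I would handle by a short case split on $s$. If $s \ge 1$, some $m_i \ge 1$, so the multinomial (or telescoped) bound above is already strict. If $s = 0$, then every $m_i = 0$, so $\prod_i m_i^{m_i} = 1$ and $r = n$, and the claim reduces to $C(n,k) = n! < n^n$, which holds since $2 \le n$. I expect the only real friction to be this strictness bookkeeping, together with being careful about the $0^0 = 1$ convention when some $\lfloor n/\alpha_i \rfloor$ vanishes (precisely the regime where $C(n,k)$ becomes independent of $k$). It is worth noting that the naive route — bounding $n! < n^n$ and $m_i! \ge m_i^{m_i} e^{-m_i}$ separately — loses a factor $e^{s}$ and is not enough: matching $n^n$ against the $m_i^{m_i}$ and the leftover $r^r$ simultaneously, as the multinomial bound does, is exactly what makes the argument close. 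Since the paper's binomial/multinomial library already supplies the generalized Newton formula, the only auxiliary fact that might need a dedicated proof is the bound $\binom{N}{j}\, j^j (N-j)^{N-j} \le N^N$, which is immediate from that formula.
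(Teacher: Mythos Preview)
Your proof is correct and follows essentially the same route as the paper: both bound the relevant multinomial coefficient by the single term it contributes to a multinomial expansion. The only difference is packaging --- the paper applies the bound to $\binom{t}{m_1,\dots,m_k}$ with $t=\sum_i m_i$ and then bounds the leftover factor $n(n-1)\cdots(t+1)\cdot t^t$ by $n^n$, whereas you absorb the remainder $r=n-t$ as a $(k{+}1)$-st part and apply the bound once at level~$n$.
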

\begin{proof}

First observe that if $m = m_1 + \dots + m_k$ where $m$ and the $m_i$
are (not all zero) non-negative integers, we have because of the
multinomial theorem:

\begin{align}
\label{multi_ineq}
\left( m_1 + \dots + m_k\right)^m \ge \binom{m}{m_1, \dots, m_k} m_1^{m_1} \dotsm m_k^{m_k}.
\end{align}

Let $k \ge 1$, and define: 
\[ t = \sum_{i=1}^{k} \left\lfloor \frac{n}{\alpha_i} \right\rfloor. \]
Then $t < n$ by Lemma~\ref{suminv}. We have:
\begin{align}
\label{c_n_fun_of_t}
C(n,k) & = n \cdot (n-1) \dotsm (t+1) \binom{t}{\lfloor n / \alpha_1 \rfloor , \lfloor  n / \alpha_2 \rfloor , \dots ,\lfloor n / \alpha_k\rfloor}.
\end{align}
Because of equation (\ref{multi_ineq}), we know that

\begin{align}
\label{ineq_multi_t}
\binom{t}{\lfloor n / \alpha_1 \rfloor , \lfloor n / \alpha_2 \rfloor , \dots
  ,\lfloor n / \alpha_k\rfloor} & \leq \frac{t^t}{\lfloor n / \alpha_1 \rfloor
  ^{\lfloor n / \alpha_1 \rfloor} \lfloor n / \alpha_2 \rfloor ^{\lfloor n / \alpha_2
    \rfloor} \dots \lfloor n / \alpha_k\rfloor^{\lfloor n / \alpha_k\rfloor}}.
\end{align}

From equations (\ref{c_n_fun_of_t}) and (\ref{ineq_multi_t}) we deduce that
\begin{align*}
C(n,k) & < \frac{n^{n}}{\lfloor n / \alpha_1 \rfloor ^{\lfloor n / \alpha_1
    \rfloor} \lfloor n / \alpha_2 \rfloor ^{\lfloor n / \alpha_2 \rfloor} \dots
  \lfloor n / \alpha_k\rfloor^{\lfloor n / \alpha_k\rfloor}}.
  \qedhere
\end{align*}
\end{proof}

\begin{lem}\label{lem_k_loglogn}
Let $k \ge 3$, $n\in \mathbb{N}$. If $\alpha_k \leq n$ then
\[k < \lfloor \log_2 {\lfloor \log_2 n \rfloor} \rfloor + 2.\]
\end{lem}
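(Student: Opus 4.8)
The plan is to reduce the statement to a doubly-exponential lower bound for the fast-growing sequence $\alpha$. Recall that $\alpha_0 = 2$ and $\alpha_{n+1} = \alpha_n^2 - \alpha_n + 1$, so that $\alpha_1 = 3$, $\alpha_2 = 7$, $\alpha_3 = 43$, and that $\alpha_{n+1} - \alpha_n = (\alpha_n - 1)^2 \geq 1$, whence $\alpha$ is increasing (all its terms being $\geq 2$). The key estimate I would establish first is: for every $i \geq 1$, $\alpha_{i+1} > 2^{2^i}$. This is a short induction on $i$. The base case $i = 1$ reads $\alpha_2 = 7 > 4 = 2^{2^1}$. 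For the inductive step, $\alpha_{i+1}$ is an integer strictly above $2^{2^i}$, hence $\alpha_{i+1} \geq 2^{2^i} + 1$; since $x \mapsto x^2 - x + 1$ is increasing on the positive integers, $\alpha_{i+2} = \alpha_{i+1}^2 - \alpha_{i+1} + 1 \geq (2^{2^i}+1)^2 - (2^{2^i}+1) + 1 = (2^{2^i})^2 + 2^{2^i} + 1 > 2^{2^{i+1}}$.

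Next I would settle the iterated-logarithm bookkeeping. Write $m = \lfloor \log_2 n \rfloor$ and $j = \lfloor \log_2 m \rfloor$. Because $\alpha$ is increasing and $k \geq 3$, the hypothesis $\alpha_k \leq n$ gives $n \geq \alpha_3 = 43$, hence $m \geq \lfloor \log_2 43 \rfloor = 5$ and then $j \geq \lfloor \log_2 5 \rfloor = 2$; in particular both truncated logarithms behave as the ``largest exponent'' functions provided by the library. Unfolding these definitions yields $n < 2^{m+1}$ and $m < 2^{j+1}$, the latter giving $m + 1 \leq 2^{j+1}$, so that $n < 2^{m+1} \leq 2^{2^{j+1}}$.

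Finally I would conclude by contradiction: suppose $k \geq j + 2$. Applying the key estimate with $i = j + 1$ (which is $\geq 1$) gives $\alpha_{j+2} = \alpha_{(j+1)+1} > 2^{2^{j+1}} > n$, whereas monotonicity of $\alpha$ together with $j + 2 \leq k$ gives $\alpha_{j+2} \leq \alpha_k \leq n$, a contradiction. Hence $k \leq j + 1 < \lfloor \log_2 \lfloor \log_2 n \rfloor \rfloor + 2$, as claimed. I expect the only real friction to be on the formalization side: coaxing the \texttt{trunc\_log}-based lemmas into yielding $n < 2^{m+1}$ and $m < 2^{j+1}$ in directly usable form, and discharging the small concrete computations underlying the chain $n \geq 43 \Rightarrow m \geq 5 \Rightarrow j \geq 2$. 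The mathematical heart — the one-line induction giving $\alpha_{i+1} > 2^{2^i}$ — is routine, and everything else is elementary inequality manipulation.
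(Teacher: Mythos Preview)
Your proposal is correct and follows essentially the same approach as the paper: establish by a short induction a doubly-exponential lower bound on the sequence (the paper states $\alpha_k > 2^{2^{k-2}} + 1$ for $k \ge 3$, you prove the slightly sharper $\alpha_{i+1} > 2^{2^i}$ for $i \ge 1$), and then read off the iterated-logarithm inequality. Your handling of the floor-function bookkeeping via the contradiction argument is more explicit than the paper's one-line sketch, but the mathematical content is the same.
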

\begin{proof}
First observe by a simple induction that for all $k \geq 3$, $\alpha_{k} >
2^{2^{k-2}} + 1$ so that $k-2 < \lfloor \log_2 \left( \lfloor \log_2
\alpha_k \rfloor \right) \rfloor \leq \lfloor \log_2 \left( \lfloor \log_2
n \rfloor \right) \rfloor$.  
\end{proof}

\begin{lem}\label{ineq2_5}
Let $k \ge 1$, $n\in \mathbb{N}$. If $\alpha_k \leq n$,
\[ C(n,k) < \frac{n^n (\frac{10 \, n}{\alpha_1})^{\frac{\alpha_1 - 1}{\alpha_1}} (\frac{10 \, n}{\alpha_2})^{\frac{\alpha_2 - 1}{\alpha_2}} \dots (\frac{10 \, n}{\alpha_k})^{\frac{\alpha_k - 1}{\alpha_k}}} {\left(\frac{10 \, n}{\alpha_1}\right)^\frac{10 \, n}{\alpha_1} \left(\frac{10 \, n}{\alpha_1}\right)^\frac{10 \, n}{\alpha_1} \dots \left(\frac{10 \, n}{\alpha_k}\right)^\frac{10 \, n}{\alpha_k}}.\]
\end{lem}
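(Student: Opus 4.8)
The plan is to obtain Lemma~\ref{ineq2_5} by feeding the pointwise bound of Lemma~\ref{analysis_ineq} into the global bound of Lemma~\ref{C(n)_ineq_1}, one factor at a time. Recall that Lemma~\ref{C(n)_ineq_1} already gives, for $k \geq 1$ and $n \geq 2$,
\[
C(n,k) < \frac{n^n}{\prod_{i=1}^k \floor{n/\alpha_i}^{\floor{n/\alpha_i}}},
\]
so it only remains to bound each denominator factor $\floor{n/\alpha_i}^{\floor{n/\alpha_i}}$ from below, which is precisely what Lemma~\ref{analysis_ineq} does.

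First I would record that the sequence $\alpha$ is strictly increasing, since $\alpha_{i+1} - \alpha_i = (\alpha_i - 1)^2 \geq 1$ (every $\alpha_i$ being at least $2$). Hence, for $k \geq 1$, the hypothesis $\alpha_k \leq n$ yields on the one hand $n \geq \alpha_k \geq \alpha_1 \geq 2$, so the side condition of Lemma~\ref{C(n)_ineq_1} is satisfied, and on the other hand $\alpha_i \leq \alpha_k \leq n$ for every $1 \leq i \leq k$, so Lemma~\ref{analysis_ineq} applies at each such $i$. Dividing the conclusion of Lemma~\ref{analysis_ineq} by the positive quantity $\left(\frac{n}{\alpha_i}\right)^{\frac{n}{\alpha_i}}$ gives
\[
\frac{1}{\floor{n/\alpha_i}^{\floor{n/\alpha_i}}} < \frac{\left(\frac{10 \, n}{\alpha_i}\right)^{\frac{\alpha_i-1}{\alpha_i}}}{\left(\frac{n}{\alpha_i}\right)^{\frac{n}{\alpha_i}}}, \qquad 1 \leq i \leq k.
\]
All quantities above are strictly positive (in particular $\floor{n/\alpha_i} \geq 1$ because $n \geq \alpha_i$), so I would multiply these $k \geq 1$ strict inequalities, multiply by $n^n$, and chain with Lemma~\ref{C(n)_ineq_1} to conclude
\[
C(n,k) < n^n \prod_{i=1}^k \frac{\left(\frac{10 \, n}{\alpha_i}\right)^{\frac{\alpha_i-1}{\alpha_i}}}{\left(\frac{n}{\alpha_i}\right)^{\frac{n}{\alpha_i}}} = \frac{n^n \left(\frac{10 \, n}{\alpha_1}\right)^{\frac{\alpha_1-1}{\alpha_1}} \dotsm \left(\frac{10 \, n}{\alpha_k}\right)^{\frac{\alpha_k-1}{\alpha_k}}}{\left(\frac{n}{\alpha_1}\right)^{\frac{n}{\alpha_1}} \dotsm \left(\frac{n}{\alpha_k}\right)^{\frac{n}{\alpha_k}}},
\]
which is the asserted inequality, the denominator in the statement being read as $\prod_{i=1}^k (n/\alpha_i)^{n/\alpha_i}$.

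On paper this is essentially a two-line corollary of Lemmas~\ref{C(n)_ineq_1} and~\ref{analysis_ineq}, so I expect the real effort to lie entirely on the formal side rather than in any mathematical step. The quantities $\left(\frac{n}{\alpha_i}\right)^{\frac{n}{\alpha_i}}$ and $\left(\frac{10 \, n}{\alpha_i}\right)^{\frac{\alpha_i-1}{\alpha_i}}$ are powers of non-negative rationals by \emph{rational} exponents; since the \MC{} libraries contain no analysis, these must be handled through the $n$-th root function on the type \C{algC}, and the bulk of the proof is bookkeeping: casting back and forth between \C{rat} and \C{algC}, discharging the non-negativity side conditions required to compare such generalized powers and to multiply a family of strict inequalities, using the laws $x^{a+b} = x^a x^b$ and $(xy)^a = x^a y^a$ to carry out the final rearrangement of a product of quotients, and collapsing the $k$ pointwise bounds into a single inequality between the two big products (a straightforward induction on $k$, or an application of the corresponding library lemma on finite products). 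None of these steps introduces mathematical content beyond the two cited lemmas.
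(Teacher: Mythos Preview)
Your proposal is correct and follows exactly the paper's approach: the paper's proof is the single sentence ``The result is straightforward by combining Lemmas~\ref{analysis_ineq} and~\ref{C(n)_ineq_1}'', and you have simply unpacked this combination, including the verification of the side conditions ($\alpha$ increasing, $n\ge 2$, $\alpha_i\le n$) and the correct reading of the denominator as $\prod_i (n/\alpha_i)^{n/\alpha_i}$, which the paper confirms when it restates the inequality in the proof of Theorem~\ref{thm_ineq}.
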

\begin{proof}
The result is straightforward by combining Lemmas~\ref{analysis_ineq}
and~\ref{C(n)_ineq_1}.  
\end{proof}

\begin{lem}
\label{lem:wk}
Let $w_k = \prod\limits_{i=1}^{k} \alpha_i^{\frac{1}{\alpha_i}}$, $k \ge
1$. Then $w_k$ is increasing and there exists $w \in \mathbb{R}$, with
\[ w < 2.98, \]
such that $w_k < w$.
\end{lem}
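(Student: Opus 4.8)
The plan is to prove the two assertions separately: monotonicity is immediate, and the bound follows from a short elementary estimate exploiting the doubly‑exponential growth of $(\alpha_n)$.

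For monotonicity, I would first record that $\alpha_n \ge 2$ for all $n$ (true at the base case and preserved by $\alpha_{n+1} = \alpha_n^2 - \alpha_n + 1$, since $\alpha_n(\alpha_n-1)\ge 2$ once $\alpha_n \ge 2$). Hence every factor satisfies $\alpha_i^{1/\alpha_i} > 1$ — precisely the equivalence $1 < \alpha_i^{1/\alpha_i} \Leftrightarrow 1 < \alpha_i$ afforded by the behaviour of roots on $\bQbar \cap \bR$, in the spirit of \C{rootC_ge0} — so $w_{k+1} = w_k\,\alpha_{k+1}^{1/\alpha_{k+1}} > w_k$, which gives that $(w_k)_{k\ge1}$ is strictly increasing.

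For the bound, the key is the elementary inequality $\alpha_i^{1/\alpha_i} < 1 + 2/\alpha_{i-1}$, valid as soon as $\alpha_{i-1} > 2$ (hence for all but the first few indices). By monotonicity of the $\alpha_i$‑th root this is equivalent to $\alpha_i < (1 + 2/\alpha_{i-1})^{\alpha_i}$, and keeping only the degree‑two term of the binomial expansion it suffices that $\binom{\alpha_i}{2}(2/\alpha_{i-1})^2 > \alpha_i$, which after substituting $\alpha_i = \alpha_{i-1}^2 - \alpha_{i-1} + 1$ and clearing denominators reduces to $\alpha_{i-1}(\alpha_{i-1}-2) > 0$ — a purely rational manipulation in the style of the proof of Lemma~\ref{analysis_ineq}. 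I would then split $w_k$ into a fixed prefix $\alpha_1^{1/\alpha_1}\cdots\alpha_{j_0}^{1/\alpha_{j_0}}$ and a tail $\prod_{i>j_0}\alpha_i^{1/\alpha_i}$: each prefix factor is bounded above by an explicit rational $r_i$ with $r_i^{\alpha_i} > \alpha_i$ (an integer inequality settled by \C{lia} or direct computation), while the tail is at most $\prod_{i>j_0}(1 + 2/\alpha_{i-1})$, which a one‑line induction bounds by $1 + 2\sum_{i>j_0}2/\alpha_{i-1}$ whenever that sum is $\le \tfrac12$ (and by $\tfrac1{1-\sum}$ whenever it is $<1$). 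The remaining sum $\sum_{i>j_0}1/\alpha_{i-1}$ is controlled by Lemma~\ref{suminv} — equivalently by the telescoping $1/\alpha_j = 1/(\alpha_j-1) - 1/(\alpha_{j+1}-1)$, which follows from $\alpha_{j+1} - 1 = \alpha_j(\alpha_j-1)$ — and is small; taking $j_0$ to be a suitable small constant then makes the product of the rational bounds strictly less than $2.98$, and one takes $w$ to be that value.

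The difficulty here is not mathematical depth but staying within rational and algebraic arithmetic: since the \MC{} libraries contain no real analysis, $\alpha_i^{1/\alpha_i}$ can only be manipulated through the order on $\bQbar\cap\bR$ and the equivalence $x^{1/n}\le y \Leftrightarrow x \le y^n$ for $x,y\ge 0$, so the convergence of the infinite product has to be replaced by the explicit finite bound above. The one delicate point is calibrating $j_0$: it must be large enough that the rather crude tail bound, compounded with the accumulated slack of the explicit prefix bounds, still stays below $2.98$ (the supremum of $w_k$ is not far below this value, so the margin is modest), yet small enough that the finitely many power checks $r_i^{\alpha_i} > \alpha_i$ remain cheap.
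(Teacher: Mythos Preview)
Your argument is correct, but the paper controls the tail differently. Instead of your estimate $\alpha_i^{1/\alpha_i} < 1 + 2/\alpha_{i-1}$, the paper uses the observation that $\alpha_{i+1} < \alpha_i^2$ (together with $\alpha_{i+1} > (\alpha_i-1)^2$) to obtain, for $i\ge 3$, the halving inequality $\alpha_{i+1}^{1/\alpha_{i+1}} < \bigl(\alpha_i^{1/\alpha_i}\bigr)^{1/2}$. Iterating gives $w_{k+l} \le w_k\cdot \alpha_{k+1}^{\sum_{j=0}^{l}2^{-j}/\alpha_{k+1}} \le w_k\cdot \alpha_{k+1}^{2/\alpha_{k+1}}$, so the whole tail collapses to a \emph{single} extra root. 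With $k=4$ and explicit rational upper bounds for $\alpha_1^{1/\alpha_1},\dots,\alpha_5^{1/\alpha_5}$ (each verified by one integer-power check), this yields $w < w_4\cdot \alpha_5^{2/\alpha_5} < 2.98$ directly.

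Your route trades that halving trick for the product-to-sum inequality $\prod(1+x_i)\le 1+2\sum x_i$ (or $\le 1/(1-\sum x_i)$), after which the tail sum is killed by the telescoping of Lemma~\ref{suminv}. This is perfectly sound and has the virtue of reusing Lemma~\ref{suminv} rather than proving a new comparison between successive roots; the price is the small auxiliary induction on the product and a slightly looser tail, which in practice forces $j_0=5$ rather than the paper's effective cutoff at $4$ --- exactly the calibration issue you flag. Either way the numerical work is essentially the same: a handful of checks $r_i^{\alpha_i} > \alpha_i$ with rational $r_i$.
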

\begin{proof}
The sequence $w_k$ is increasing because ${\alpha_i}^{\frac{1}{\alpha_i}} > 1$
(because $\alpha_i > 1$). Since $\alpha_i^2 > \alpha_{i+1} > (\alpha_i - 1)^2$, one can
see that for $i \geq 3$, $\alpha_{i+1}^{\frac{1}{\alpha_{i+1}}} <
\sqrt{\alpha_i^{\frac{1}{\alpha_i}}}$, so that for all $k \geq 1$ and $l \geq
0$,

\[ w_{k+l} \leq \prod\limits_{i=1}^{k} \alpha_i^{\frac{1}{\alpha_i}} \cdot \alpha_{k+1}^{\frac{1}{\alpha_{k+1}}\sum_{i=0}^{l} \frac{1}{2^i}} \leq w_k \cdot {\alpha_{k+1}^{\frac{2}{\alpha_{k+1}}}}. \]

We establish by an elementary external computation verified in \Coq{}
that $\alpha_1^{\frac{1}{\alpha_1}} < \frac{283}{200}$, $\alpha_2^{\frac{1}{\alpha_2}} <
\frac{1443}{1000}$, $\alpha_3^{\frac{1}{\alpha_3}} < \frac{1321}{1000}$,
$\alpha_4^{\frac{1}{\alpha_4}} < \frac{273}{250}$ and $\alpha_5^{\frac{1}{\alpha_5}} <
\frac{201}{200}$. From the bound above with $k = 4$ we get $w < w_4
\cdot \alpha_5^{\frac{2}{\alpha_5}} \leq \frac{5949909309448377}{2 \cdot
  10^{15}} < 2.98$.  
\end{proof}

\begin{rem}\label{obs1}
For $k \geq 1$, we have
\begin{align*}
\frac{\alpha_1 - 1}{\alpha_1} + \frac{\alpha_2 - 1}{\alpha_2} + \dots + \frac{\alpha_k -
  1}{\alpha_k} = k - 1 + \frac{1}{\alpha_{k+1} - 1}.
\end{align*}
\end{rem}

\begin{proof}
It is a direct consequence of Lemma~\ref{suminv}.
\end{proof}
Note that the statement of Remark~\ref{obs1} actually corrects a typo
in the original paper.

\begin{thm}\label{thm_ineq}
If $\alpha_k \leq n < \alpha_{k+1}$,
\[C(n,k) = C(n) < (10 \, n)^{k - \frac{1}{2}}  w^{n+1}. \]
\end{thm}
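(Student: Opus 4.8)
The plan is to bound $C(n,k)$ by combining the two pointwise estimates of Lemmas~\ref{C(n)_ineq_1} and~\ref{analysis_ineq}, then to rewrite the resulting product using the closed forms for $\sum_{i=1}^{k}\frac{1}{\alpha_i}$ and $\sum_{i=1}^{k}\frac{\alpha_i-1}{\alpha_i}$ (Lemma~\ref{suminv} and Remark~\ref{obs1}) together with the telescoping identity $\prod_{i=1}^{k}\alpha_i=\alpha_{k+1}-1$, and finally to absorb the leftover factors using the hypothesis $n<\alpha_{k+1}$ and the uniform bound $w_k<w$ of Lemma~\ref{lem:wk}. As in all the preceding lemmas I assume $k\geq1$; note that $\alpha_k\leq n$ then automatically gives $n\geq\alpha_1=2$ and $\alpha_{k+1}\geq\alpha_2=3$.

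\emph{The equality and a product bound.} Since $\alpha$ is increasing and $n<\alpha_{k+1}$, we have $\alpha_i>n$, hence $\lfloor n/\alpha_i\rfloor=0$, for every $i>k$; so all factors $\lfloor n/\alpha_i\rfloor!$ past index $k$ equal $1$ and $C(n,j)=C(n,k)$ for every $j\geq k$, i.e.\ $C(n,k)$ is the stabilised value $C(n)$. For the inequality: because $\alpha_i\leq\alpha_k\leq n$ for $i\leq k$, Lemmas~\ref{C(n)_ineq_1} and~\ref{analysis_ineq} apply, and multiplying the estimate of Lemma~\ref{analysis_ineq} over $i=1,\dots,k$ into that of Lemma~\ref{C(n)_ineq_1} (this is Lemma~\ref{ineq2_5}) gives
\[
C(n,k)<\frac{n^{n}\prod_{i=1}^{k}(10n/\alpha_i)^{(\alpha_i-1)/\alpha_i}}{\prod_{i=1}^{k}(n/\alpha_i)^{n/\alpha_i}}.
\]
All bases being positive, I expand every power and regroup: Lemma~\ref{suminv} gives $n-\sum_{i=1}^{k}\frac{n}{\alpha_i}=\frac{n}{\alpha_{k+1}-1}$ for the net exponent of $n$, Remark~\ref{obs1} gives $\sum_{i=1}^{k}\frac{\alpha_i-1}{\alpha_i}=k-1+\frac{1}{\alpha_{k+1}-1}$ for the exponent of $10n$, and since $\frac{n}{\alpha_i}-\frac{\alpha_i-1}{\alpha_i}=\frac{n+1}{\alpha_i}-1$, the telescoping $\prod_{i=1}^{k}\alpha_i=\alpha_{k+1}-1$ turns the remaining power $\prod_{i=1}^{k}\alpha_i^{(n+1)/\alpha_i-1}$ into $w_k^{\,n+1}/(\alpha_{k+1}-1)$, with $w_k=\prod_{i=1}^{k}\alpha_i^{1/\alpha_i}$ as in Lemma~\ref{lem:wk}. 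Hence
\[
C(n,k)<n^{\frac{n}{\alpha_{k+1}-1}}\,(10n)^{k-1+\frac{1}{\alpha_{k+1}-1}}\,\frac{w_k^{\,n+1}}{\alpha_{k+1}-1}.
\]

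\emph{Conclusion.} Put $m=\alpha_{k+1}-1$, so $m\geq2$ and, by the hypothesis $n<\alpha_{k+1}$, also $n\leq m$; thus $\frac{n}{m}\leq1$ and $\frac{1}{m}\leq\frac{1}{2}$. Since all bases involved are $\geq1$, monotonicity of $x\mapsto a^x$ gives $n^{n/m}\leq n$ and $(10n)^{1/m}\leq(10n)^{1/2}$, so
\[
n^{\frac{n}{m}}(10n)^{k-1+\frac{1}{m}}\cdot\frac{1}{m}\leq n\,(10n)^{k-1}(10n)^{\frac{1}{2}}\cdot\frac{1}{m}=\frac{n}{m}\,(10n)^{k-\frac{1}{2}}\leq(10n)^{k-\frac{1}{2}}.
\]
Combining this with $w_k^{\,n+1}<w^{\,n+1}$ (from $w_k<w$, Lemma~\ref{lem:wk}) and the bound displayed above yields $C(n,k)<(10n)^{k-1/2}w^{n+1}$.

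\emph{Main obstacle.} The delicate step is the product bound: one has to carry several real-valued exponents through the expansion simultaneously and recognise that the two closed-form sums (Lemma~\ref{suminv}, Remark~\ref{obs1}) together with the telescoping $\prod_{i=1}^{k}\alpha_i=\alpha_{k+1}-1$ conspire to collapse all dependence on the individual $\alpha_i$ into the single factor $w_k^{\,n+1}/(\alpha_{k+1}-1)$ and to leave exactly the exponent $k-1+\frac{1}{\alpha_{k+1}-1}$ on $10n$. Once that shape is reached, the rest is only monotonicity of $a^x$ for $a\geq1$ applied to $n\leq\alpha_{k+1}-1$ and $\alpha_{k+1}\geq3$, so the argument stays entirely within elementary arithmetic of rationals, as intended for this route.
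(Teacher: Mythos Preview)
Your proof is correct and follows essentially the same route as the paper's: start from Lemma~\ref{ineq2_5}, collapse the exponents via Lemma~\ref{suminv} and Remark~\ref{obs1}, use the telescoping $\prod_{i=1}^{k}\alpha_i=\alpha_{k+1}-1$ together with $n\le\alpha_{k+1}-1$ and $\alpha_{k+1}\ge3$, and bound $w_k$ by $w$. The only cosmetic difference is that you regroup the $\alpha_i$-powers directly into $w_k^{\,n+1}/(\alpha_{k+1}-1)$, whereas the paper keeps $w_k^{\,n}$ and $\prod_i\alpha_i^{(\alpha_i-1)/\alpha_i}$ separate and recovers the extra factor $w_k$ from the inequality $\prod_i\alpha_i^{(\alpha_i-1)/\alpha_i}\ge n/w_k$; the two computations are equivalent.
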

\begin{proof}
From Lemma~\ref{ineq2_5}, recall that we have
\begin{align*}
C(n,k) & <  \frac{n^n (\frac{10 \, n}{\alpha_1})^{\frac{\alpha_1 - 1}{\alpha_1}}
  (\frac{10 \, n}{\alpha_2})^{\frac{\alpha_2 - 1}{\alpha_2}} \dots (\frac{10 \,
    n}{\alpha_k})^{\frac{\alpha_k - 1}{\alpha_k}}}
{\left(\frac{n}{\alpha_1}\right)^\frac{n}{\alpha_1}
  \left(\frac{n}{\alpha_2}\right)^\frac{n}{\alpha_2} \dots
  \left(\frac{n}{\alpha_k}\right)^\frac{n}{\alpha_k}}\\
& =  \frac{n^n (10 \,
  n)^{\left(\sum\limits_{i=1}^{k} \frac{\alpha_i - 1}{\alpha_i}\right)} \left(
  \prod_{i=1}^{k} \alpha_i^{\frac{1}{\alpha_i} } \right)^n } {n^{n
    \sum_{i=1}^{k} \frac{1}{\alpha_i}} \prod_{i=1}^{k} \alpha_i^{\frac{\alpha_i -
      1}{\alpha_i}}}.
\end{align*}
It can be seen using Lemma~\ref{suminv} that:
\[ n ^{n \left( 1 - \sum_{i=1}^{k} \frac{1}{\alpha_i} \right) } \leq n. \]
Thus
\begin{align*}
C(n,k) & < n \frac{(10 \, n)^{k - 1 + \frac{1}{\alpha_{k+1} - 1}} w_k^n } {
  \prod_{i=1}^{k} \alpha_i^{\frac{\alpha_i - 1}{\alpha_i}}} \textrm{
  (by Remark~\ref{obs1})} \\
& \leq n \frac{(10 \, n)^{k - 1 + \frac{1}{\alpha_{k+1} -
      1}} w^n } { \prod_{i=1}^{k} \alpha_i^{\frac{\alpha_i - 1}{\alpha_i}}} \textrm{
  because $w_k \leq w$.}
\end{align*}

Since $n < \alpha_{k+1} = 1 + \prod_{i=1}^{k} \alpha_i$, $n \leq \prod_{i=1}^{k}
\alpha_i$, and we have
\[ \prod_{i=1}^{k} \alpha_i^{\frac{\alpha_i - 1}{\alpha_i}} = \frac{\prod\limits_{i=1}^{k} \alpha_i}{w_k} \geq \frac{n}{w_k}. \]

Thus
\begin{align*}
C(n,k) & < {(10 \, n)^{k - 1 + \frac{1}{\alpha_{k+1} - 1}} w^n w_k }
\\
& \leq { (10 \, n)^{k - \frac{1}{2}} w^{n+1}} \textrm{ as $\alpha_{k+1}
  \geq 3$ and $w_k \leq w$}.\qedhere
\end{align*}
\end{proof}

We can now prove Lemma~\ref{lem:hanson}:
\begin{proof}
  We have:
\[ \ell_n / 3^n \leq C(n,k) / 3^n = (10 \, n)^{k - \frac{1}{2}} \left(\frac{w}{3}\right)^{n+1}.\]
Remembering that $k < \lfloor \log_2 {\lfloor \log_2 n \rfloor}
\rfloor + 2$ and $w < 3$, it is elementary to show that the quantity
on the right is eventually decreasing to $0$ and therefore bounded,
which proves the result.  We once again make use of the fact that
$\left(1 + \frac{1}{x} \right) ^ x$ is bounded in the course of this
elementary proof. 
\end{proof}

\section{Conclusion} 
\label{sec:conclusions_formal_proof}

We are not aware of a comprehensive, reference, formal proof library
on the topic of number theory, in any guise. The most comprehensive
work in this direction is probably the \IsaHOL{} library on analytic
number theory contributed by Eberl~\cite{DBLP:conf/itp/Eberl19}, which
covers a substantial part of an introductory textbook by
Apostol~\cite{apostol}. This library is based on an extensive corpus
in complex analysis initially formalized by Harrison in the
\HOLLight{} prover, and later ported to the \IsaHOL{} prover by
Paulson and Li.  Formal proofs also exists of a few emblematic
results. The elementary fact that $\sqrt{2}$ is irrational was used as
an example problem in a comparative study of the styles of various
theorem provers~\cite{Wiedijk:2006:SPW:1121735}, including \Coq{}. The
Prime Number theorem was proved formally for the first time by Avigad
\emph{et al}.~\cite{Avigad:2007:FVP:1297658.1297660}, using the \IsaHOL{}
prover and later by Harrison~\cite{harrison-pnt} with the
\HOLLight{} prover. Shortly after the submission of the first version
of the present paper, Eberl verified~\cite{Zeta_3_Irrational-AFP} Beuker's
proof of Apéry's theorem~\cite{Beukers-1979-NIZ}, using the \IsaHOL{}
prover, and relying on the Prime Number
theorem to derive the asymptotic properties of $\ell_n$.  Bingham was
the first to formalize a proof that $e$ is
transcendental~\cite{JFR2269}, with the \HOLLight{} prover. Later,
Bernard \emph{et al.} formally proved the transcendence of both $\pi$ and $e$
in \Coq{}~\cite{DBLP:journals/corr/BernardBRS15}.

Some of the
ingredients needed in the present proof are however not specific to
number theory. For instance, we here use a very basic, but sufficient, infrastructure
to represent asymptotic behaviors. But ``big Oh'', also called
Bachman-Landau, notations have been discussed more extensively by
Avigad \emph{et al.}~\cite{Avigad:2007:FVP:1297658.1297660} in the context of
their formal proof of the Prime Number Theorem, and by Boldo \emph{et
al.} \cite{DBLP:journals/corr/abs-1005-0824}, for the asymptotic
behavior of real-valued continuous functions.
Affeldt \emph{et al.} designed a sophisticated infrastructure for equational
reasoning in \Coq{} with Bachman-Landau
notation~\cite{DBLP:journals/jfrea/AffeldtCR18}, which relies on a
non-constructive choice operator.
Another example of such a secondary topic is the theory of multinomial
coefficients, which is also relevant to combinatorics, and which is
also defined by Hivert in his \Coq{} library
Coq-Combi~\cite{Coq-Combi}. However, up to our knowledge this library
does not feature a proof of the generalized Newton identity.

Harrison~\cite{harrison-wz} presented a way to produce
rigorous proofs
from certificates produced by Wilf-Zeilberger certificates, by
seeing sequences as limits of complex functions. His method applies to
the sequence $a$, and can in principle prove that it satisfies the recurrence
equation~(\ref{eq:apery-rec}). However, this method does not allow for
a proof that $b$ satisfies the recurrence
relation~(\ref{eq:apery-rec}), because the summand
is itself a sum but not a hypergeometric one. Up to our knowledge, there
is no known way today to justify the output of the efficient algorithms of
creative telescoping used here without handling a trace of provisos.

The idea to use computer algebra software (CAS) as an oracle
outputting a certificate to be checked by a theorem prover, dubbed a
\emph{skeptic's} approach, was first introduced by Harrison and
Théry~\cite{HarrisonThery-1998-SAC}. It is based on the observation
that CAS are very efficient albeit not always correct, while theorem
provers are sound but slow. This technique thus takes the best of both
worlds to produce reliable proofs requiring large scale
computations. In the case of \Coq{}, this viewpoint is especially
fruitful since the kernel of the proof assistant includes efficient
evaluation mechanisms for the functional programs written inside the
logic~\cite{Gregoire:2002:CIS:581478.581501}. Notable successes based
on this idea include the use of Pocklington certificates to check
primality inside \Coq{}~\cite{Gregoire2006} or external computations
of commutative Gröbner bases, with applications for instance in
geometry~\cite{DBLP:journals/corr/abs-1007-3615}. Delahaye and Mayero
proposed~\cite{Delahaye:2005:DAE:1740741.1740843} to use CAS to help
experimenting with algebraic expressions inside a proof assistant,
before deciding what to prove and how to prove it. Unfortunately,
their tool was not usable in our case, where algebraic expressions are
made with operations that come from a hierarchy of structures.

Organizing the cooperation of a CAS and a proof assistant sheds light
on their respective differences and drawbacks.  The initial motivation
of this work was to study the algorithms used for the automatic
discovery and proof of recurrences.  Our hope was to be able to craft
an automated tool providing formal proofs of recurrences, by using the
output of these algorithms, in a skeptical way. This plan did not work
and Section~\ref{sec:algos} illustrates the impact of confusing the
rational fractions manipulated by symbolic computations with their
evaluations, which should be guarded by conditions on the
denominators. On the other hand, proof assistants are not yet equipped
to manipulate the large expressions imported during the cooperation,
even those which are of a small to moderate size for the standards of
computer algebra systems. For instance, we have highlighted in our
previous report~\cite{chyzak:hal-00984057} the necessity to combine
two distinct natures of data-structures in our libraries: one devoted
to formal proofs, which may use computation inside the logic to ease
bureaucratic steps, and one devoted to larger scale computations,
which provides a fine-grained control on the complexity of
operations. The later was crucial for the computations involved in the
normalization of ring expressions during the a posteriori verification
of computer-algebra produced recurrences. But it was also instrumental
in our proof of Lemma~\ref{lem:asympta}.

Incidentally, our initial formal proof of Lemma~\ref{lem:wk} also
involved this nature of calculations, with rather larger
numbers\footnote{For the current standards of proof assistants.}. Indeed, the
proof requires bounding the five first values of sequence $\alpha_n^{1
  / a_n}$ and the straightforward strategy involves intermediate
computations with integers with about $4160$ decimal digits. Following
a suggestion by one of the anonymous referees, we now use a less naive
formula to bound $\alpha_5$. This dramatically reduces the size of the
numbers involved, to the price of some additional manual bureaucracy
in the proof script, mostly for evaluating binomial coefficients
without an appropriate support at the level of libraries.

The \Coq{} proof assistant is not
equipped with a code generation feature akin to what is offered, for
instance, in the \IsaHOL{}
prover~\cite{DBLP:conf/esop/HupelN18}. In principle, it is possible to
plug in \IsaHOL{} formal developments the result of computations executed by
external, generated programs that are verified down to
machine code. In the present formal proofs, computation are instead
carried \emph{inside} the logic, using the Calculus of Inductive
constructions as a programming language. Such an approach is possible
in \Coq{} because its proof-checker includes an efficient mechanism
for evaluating these functional
programs~\cite{Gregoire:2002:CIS:581478.581501}. Automating the
correctness proofs of the program transformations required for the
sake of efficiency is on-going research. In the present work, we have
used the  \CoqEAL{} library~\cite{cohen:hal-01113453}, which is itself based on a 
plugin for parametricity proofs~\cite{DBLP:conf/csl/KellerL12}. The
variant proposed by Tabareau \emph{et
al.}~\cite{DBLP:journals/pacmpl/TabareauTS18} might eventually help
improving the extensibility of the refinement framework, which is a
key issue.

The interfaces of proof assistants are also notoriously
less advanced than those of modern computer algebra. For instance,
reasoning by transitivity on long chains of equalities/inequalities is
often cumbersome in \Coq{}, because of the limited support for
selecting terms in an expression and for reasoning by transitivity.
The Lean theorem prover~\cite{deMoura2015} features a \verb|calc|
environment for proofs on transitive relations which might be used as
a first step in this direction.

On several occasions in this work, we wrote more elementary versions
of the proofs than what we had found in the texts we were
formalizing. We partly agree with
Avigad~\emph{et~al.}~\cite{Avigad:2007:FVP:1297658.1297660} when they write that this
can be both frustrating and enjoyable: on one hand, it can illustrate
the lack of mathematical libraries for theorems which mathematicians
would find simple, such as elementary analysis for studying the
asymptotics of sequences as in Section~\ref{sec:asymptotics}. Ten
years later, ``the need for elementary workarounds'' is still present,
despite his fear that it would ``gradually fade, and with it, alas,
one good reason for investing time in such
exercises''\cite{Avigad:2007:FVP:1297658.1297660}.  On the other hand,
this need gives an opportunity to better understand the minimal scope
of mathematical theories used in a proof, with the help of a
computer. For instance, it was not clear to us at first that we could
manage to completely avoid the need to define transcendental
functions, or to avoid defining the constant $e$, base of the natural
logarithm, to formalize Hanson's paper~\cite{Hanson-1972-PP}.
This minimality however comes at the price of arguably more pedestrian
calculations.

\section*{Acknowledgment}
We wish to thank the anonymous reviewers, Alin Bostan, Frédéric
Chyzak, Georges Gonthier, Marc Mezzarobba, and Bruno Salvy for their
comments and suggestions.  We also thank Cyril Cohen, Pierre Roux and
Enrico Tassi for their help, in particular with the libraries this
work depends on.

\bibliographystyle{alpha}      
\bibliography{biblio}   

\end{document}